\definecolor{darkgray}{rgb}{0.66, 0.66, 0.66}
\DeclareMathOperator*{\argmin}{arg\,min}
\newtheorem{theorem}{Theorem}
\newtheorem{lemma}[theorem]{Lemma}
\newtheorem{proposition}[theorem]{Proposition}
\useunder{\uline}{\ul}{}
\newcommand*\mystrut[1]{\vrule width0pt height0pt depth#1\relax}
\begin{document}
%
\title{Multiple Kernel Representation Learning on Networks}
%
%
%
%

\author{Abdulkadir~Çelikkanat, Yanning Shen, and Fragkiskos~D.~Malliaros
\IEEEcompsocitemizethanks{\IEEEcompsocthanksitem  A. Çelikkanat is with the Department of Applied Mathematics and Computer Science, Technical University of Denmark, 2800 Kgs. Lyngby, Denmark. The research has been mostly conducted while the author was with Paris-Saclay University, CentraleSupélec, Inria, Centre for Visual Computing (CVN), 91190 Gif-Sur-Yvette, France \protect\\
E-mail: abdcelikkanat@gmail.com
\IEEEcompsocthanksitem  Y. Shen is with the Department of EECS and the Center for Pervasive Communications and Computing, University of California, Irvine, CA 92697, USA \protect\\
E-mail: yannings@uci.edu
\IEEEcompsocthanksitem  F. D. Malliaros is with Paris-Saclay University, CentraleSupélec, Inria, Centre for Visual Computing (CVN), 91190 Gif-Sur-Yvette, France \protect\\
E-mail: fragkiskos.malliaros@centralesupelec.fr
}
\thanks{Manuscript received XXX; revised XXX.}}
\IEEEtitleabstractindextext{%
\begin{abstract}
Learning representations of nodes in a low dimensional space is a crucial task with numerous interesting applications in network analysis, including link prediction, node classification, and visualization. Two popular approaches for this problem are \textit{matrix factorization} and \textit{random walk}-based models. In this paper, we aim to bring together the best of both worlds, towards learning node representations. In particular, we propose a weighted matrix factorization model that encodes random walk-based information about nodes of the network. The benefit of this novel formulation is that it enables us to utilize kernel functions without realizing the exact proximity matrix so that it enhances the expressiveness of existing matrix decomposition methods with kernels and alleviates their computational complexities. We extend the approach with a multiple kernel learning formulation that provides the flexibility of learning the kernel as the linear combination of a dictionary of kernels in data-driven fashion. We perform an empirical evaluation on real-world networks, showing that the proposed model outperforms baseline node embedding algorithms in downstream machine learning tasks.
\end{abstract}

\begin{IEEEkeywords}
Graph representation learning, node embeddings, kernel methods, node classification, link prediction
\end{IEEEkeywords}}

\maketitle

\IEEEdisplaynontitleabstractindextext

%
\IEEEpeerreviewmaketitle


\ifCLASSOPTIONcompsoc
\IEEEraisesectionheading{\section{Introduction}\label{sec:introduction}}
\else
\section{Introduction}
\label{sec:introduction}
\fi
\IEEEPARstart{W}{ith} the development in data production, storage and consumption, graph data is becoming omnipresent; data from diverse disciplines can be represented as graph structures with prominent examples including various social, information, technological, and biological networks. Developing machine learning algorithms to analyze, predict, and make sense of graph data has become a crucial task with a plethora of cross-disciplinary applications \cite{survey_hamilton_leskovec, GRL-survey-ieeebigdata20}. The major challenge in machine learning on graph data concerns the encoding of information about the graph structural properties into the learning model. To this direction,  \textit{network representation learning} (NRL), a recent paradigm in network analysis, has received substantial attention thanks to its outstanding performance in downstream tasks, including link prediction and classification. Representation learning techniques mainly target to embed the nodes of the graph into a lower-dimensional space in such a way that desired relational properties among graph nodes are captured by the similarity of the representations in the embedding space \cite{deepwalk, node2vec, laplacian_eigenmap, grarep, hope, adaptive_diffusion}. 

\par The area of NRL has been highly impacted by traditional nonlinear dimensionality reduction approaches \cite{laplacian_eigenmap, locally_linear_embedding}. Specifically, many proposed models had initially concentrated on learning node embeddings relying on matrix factorization techniques that encode structural node similarity\cite{laplacian_eigenmap, grarep, hope}. Nevertheless, most of those approaches are not very efficient for large scale networks, mainly due to the high computational and memory cost required to perform matrix factorization. Besides, most such models require the exact realization of the target matrix \cite{survey_hamilton_leskovec, GRL-survey-ieeebigdata20}.

\par Inspired by the field of natural language processing \cite{word2vec},  random-walk based embedding has gained considerable attention (e.g., \cite{deepwalk,node2vec,biasedwalk, tne, efge-aaai20, InfiniteWalk-kdd20}). Typically, these methods firstly produce a set of node sequences by following certain random walk strategy. Then, node representations are learned by optimizing the relative co-occurrence probabilities of nodes within the random walks. Although such random walk models follow similar approaches in modeling the relationships between nodes and optimizing the embedding vectors, their difference mainly stems from the way in which node sequences are sampled \cite{deepwalk,node2vec}.

\par On a separate topic, \textit{kernel functions} have often been introduced along with popular learning algorithms, such as PCA \cite{kernel_pca}, SVM \cite{svm}, Spectral Clustering \cite{kernel_clustering}, and Collaborative Filtering \cite{kernel_matrix_fact}, to name a few. Most of traditional learning models are insufficient to capture the underlying substructures of complex datasets, since they rely  on linear techniques to model  nonlinear patterns existing in data. Kernel functions \cite{kernelsML08}, on the other hand, allow mapping non-linearly separable points into a (generally) higher dimensional feature space, so that the inner product in the new space can be computed without needing to compute the exact feature maps---bringing further computational benefits. Besides, to further reduce model bias, \textit{multiple kernel learning} approaches have been proposed to learn optimal combinations of kernel functions \cite{mkl-jmlr11}. Nevertheless, despite their wide applications in various fields \cite{kernelsML08, Wang2012rmSM}, kernel and multiple kernel methods have not been thoroughly investigated for learning node embeddings.

\par In this paper, we aim at combining matrix factorization and random walks in a kernelized model for learning node embeddings. The potential advantage of such a modeling approach is that it allows for leveraging and combining the elegant mathematical formulation offered by matrix factorization with the expressive power of random walks to capture a notion of ``stochastic'' node similarity in an efficient way. More importantly, this formulation enables leveraging kernel functions in the node representation learning task. Because of the nature of matrix factorization-based models, node similarities can be viewed as inner products of vectors lying in a latent space---which allows to utilize kernels towards interpreting the embeddings in a higher dimensional feature space using non-linear maps. Besides, multiple kernels can be utilized to learn more discriminative node embeddings. Note that, although graph kernels is a well-studied topic in graph learning \cite{KriegeJM20}, it mainly focuses on graph classification---a task outside of the scope of this paper. To the best of our knowledge, random walk-based multiple kernel matrix factorization has not been studied before for learning node embeddings.

\par The main contributions of the present work can be summarized as follows:
\begin{itemize}
    \item We propose \textsc{KernelNE} (Kernel Node Embeddings), a novel approach for learning node embeddings by incorporating kernel functions with models relying on weighted matrix factorization, encoding random walk-based structural information of the graph. We further examine the performance of the model with different types of kernel functions.
    \item To further improve expressiveness, we introduce \textsc{MKernelNE}, a multiple kernel learning formulation of the model. It extends the kernelized weighted matrix factorization framework by learning a linear combination of a predefined set of kernels. 
    \item We demonstrate how both models (single and multiple kernel learning) leverage negative sampling to efficiently compute node embeddings. 
    \item We extensively evaluate the proposed method's performance in node classification and link prediction. We show that the proposed models generally outperform well-known baseline methods on various real-world graph datasets. Besides, due to the efficient model optimization mechanism, the running time is comparable to the one of random walk models.
\end{itemize}

\noindent \textbf{Notation.} We use the notation $\mathbf{M}$ to denote a matrix, and the term $\mathbf{M}_{(v,u)}$ represents the entry at the $v$'th row and $u$'th column of the matrix $\mathbf{M}$. $\mathbf{M}_{(v,:)}$ indicates the $v$'th row of the matrix.

 \vspace{.2cm}

\noindent \textbf{Source code.} The C++ implementation of the proposed methodology can be reached at: \texttt{\url{https://abdcelikkanat.github.io/projects/kernelNE/}}.

\section{Related Work} \label{sec:related-work}

\noindent \textbf{Node embeddings.} Traditional dimension reduction methods such as Principal Component Analysis (PCA) \cite{pca}, Locally Linear Embeddings \cite{locally_linear_embedding}, and ISOMAP \cite{isomap} that rely on the matrix factorization techniques,  have considerably influenced early approaches on learning node embeddings. For instance, \textsc{HOPE} \cite{hope} learns embedding vectors by factorizing a higher-order proximity matrix using SVD, while \textsc{M-NMF} \cite{nmf} leverages non negative matrix factorization aiming to preserve the underlying community structure of the graph. Similarly, \textsc{NetMF} \cite{netmf} and its scalable variant \textsc{NetSMF} \cite{netsmf-www2019}, learn embeddings by properly factorizing a target matrix that has been designed based on the Positive Pointwise Mutual Information (PPMI) of random walk proximities. Similarly, \textsc{SDAE} \cite{SDAE} uses the PPMI matrix with stacked denoising autoencoders to learn the representations. SDNE \cite{sdne} adapts a deep neural network architecture to capture the highly non-linear network patterns. Nevertheless, as we have mentioned in Sec. \ref{sec:introduction}, most  of these models suffer from high time complexity, while at the same time, they assume an exact realization of the target matrix. For a detailed description of matrix factorization models for node embeddings, the reader can refer to \cite{GRL-survey-ieeebigdata20,survey_hamilton_leskovec, hamilton-grl20}.

\par To further improve the expressiveness as well as to alleviate the computational burden of matrix factorization approaches, models that rely on random walk sampling have been developed (e.g., \cite{deepwalk,node2vec,biasedwalk, tne, tne_journal, efge-aaai20, InfiniteWalk-kdd20, watchyourstep, verse-www18}). These models aim at modeling \textit{center-context} node relationships in random walk sequences, leveraging advances on learning word embeddings in natural language processing \cite{word2vec}. Due to the flexible way that random walks can be defined, various formulations have been proposed, with the two most well-known being \textsc{DeepWalk} \cite{deepwalk} and \textsc{Node2Vec} \cite{node2vec}. Most of these models though, are based on shallow architectures which do not allow to model complex non-linear data relationships, limiting the prediction capabilities on downstream tasks (e.g., node classification) \cite{hamilton-grl20}. There are also recent efforts towards developing scalable models relying on fast random projections for very large networks \cite{prone, nodesketch, randne, fastrp}.

\par As will be presented shortly, in this paper we aim to bring together the best of both worlds, by proposing an efficient random walk-based matrix factorization framework that allows to learn informative embeddings.

\vspace{.2cm}

\noindent \textbf{Kernel methods.} Although most algorithms in the broader field of machine learning have been developed for the linear case, real-world data often requires nonlinear models capable of unveiling the underlying complex relationships towards improving the performance of downstream tasks. To that end, kernel functions allow computing the inner product among data points in a typically high-dimensional feature space, in which linear models could still be applied, without explicitly computing the feature maps \cite{kernelsML08}. Because of the generality of the inner product, kernel methods have been widely used in a plethora of models, including Support Vector Machine \cite{svm}, PCA \cite{kernel_pca}, spectral clustering \cite{kernel_clustering, kernel-spectral-clustering-mapi10}, collaborative filtering \cite{kernel_matrix_fact}, and non-negative matrix factorization for image processing tasks \cite{nmf_kernel}, to name a few.

\par Kernel functions have also been utilized in the field graph analysis. At the node level, diffusion kernels and their applications \cite{Kondor02diffusionkernels} constitute notable instances. At the graph level, \textit{graph kernels} \cite{vishwanathan2010graph}, such as the random walk and Weisfeiler-Lehman kernels \cite{KriegeJM20}, have mainly been utilized to measure the similarity between a pair of graphs for applications such as graph classification. Besides, they have also used for capturing the temporal changes in time-evolving networks \cite{Melnyk2020}. Recent approaches have also been proposed to leverage graph kernels for node classification, but in a supervised manner  \cite{rethinking_kernels}. Other related applications of kernel methods on graphs include topology inference \cite{topology-inference-tsp17, inference-dynamic17}, signal reconstruction \cite{recostruction-tsp17}, and anomaly detection \cite{dots-spm19}.

\par The expressiveness of kernel methods can further be enhanced using multiple kernel functions in a way that the best possible combination of a set of predefined kernels can be learned \cite{Bach-MKL, mkl-jmlr11}. Besides improving prediction performance, multiple kernels have also been used to combine information from distinct heterogeneous sources (please see \cite{mkl-jmlr11} and \cite{WANG20213} for a detailed presentation  of several multiple kernel learning algorithms and their applications). Examples include recent approaches   that leverage multi-kernel strategies for graph and image datasets \cite{SALIM2020103534, 9040639}.

\par Despite the widespread applications of kernel and multiple kernel learning methods, utilizing them for learning node embeddings via matrix factorization in an unsupervised way is a problem that has not been thoroughly investigated. Previous works that are close to our problem settings, which follow a  methodologically different factorization approach (e.g., leveraging nonnegative matrix factorization \cite{kernel_matrix_fact}), targeting different application domains (e.g., collaborative filtering \cite{mknmf}). The multiple kernel framework for graph-based dimensionality reduction  proposed by Lin et al. \cite{mkl_for_dim_red} is also close to our work. Nevertheless, their work focuses mainly on leveraging multiple kernel functions to fuse image descriptors in computer vision tasks properly.

\par In this paper we propose  novel unsupervised models for node embeddings, that implicitly perform weighted matrix decomposition in a higher-dimensional space through kernel functions. The target pairwise node proximity matrix is properly designed to leverage information from random walk sequences, and thus, our models do not require the exact realization of this matrix. Both single and multiple kernel learning formulations are studied. Emphasis is also put on optimization aspects to ensure that node embeddings are computed efficiently. 
\section{Modeling and Problem Formulation}\label{sec:problem_formulation}

Let $G=(\mathcal{V}, \mathcal{E})$ be a graph where $\mathcal{V}=\{1,\ldots,n\}$ and $\mathcal{E}\subseteq\mathcal{V}\times\mathcal{V}$ are the vertex and edge sets, respectively. Our goal is to find node representations in a latent space, preserving properties of the network. More formally, we define the general objective function  of our problem as a weighted matrix factorization \cite{weighted_low_rank}, as follows:

\begin{align}\label{eq:main_obj_func}
    \argmin_{\mathbf{A},\mathbf{B}}\underbrace{\left\Vert\mathbf{W}\odot(\mathbf{M} - 
    \mathbf{A}\mathbf{B}^{\top})\right\Vert_F^2}_{\text{Error  term}} \!+\!\! \underbrace{\frac{\lambda}{2}\left( \Vert \mathbf{A} \Vert_F^2 +  \Vert \mathbf{B} \Vert_F^2 \right)}_{\text{Regularization term, $\mathcal{R}(\mathbf{A}, \mathbf{B})$}},
\end{align}

\noindent where $\mathbf{M}\in\mathbb{R}^{n\times n}$ is the target matrix constructed based on the desired properties of a given network, which is used to learn node embeddings $\mathbf{A}, \mathbf{B}\in \mathbb{R}^{n \times d}$, and $\Vert\cdot\Vert_F$ denotes the \textit{Frobenius} norm. We will use $\mathcal{R}(\mathbf{A},\mathbf{B})$ to denote the regularization term of Eq. \eqref{eq:main_obj_func}. Each element $\mathbf{W}_{v,u}$ of the weight matrix $\mathbf{W} \in \mathbb{R}^{n \times n}$ captures the importance of the approximation error between nodes $v$ and $u$, and $\odot$ indicates the \textit{Hadamard} product. Depending on the desired graph properties that we are interested in encoding, there are many possible alternatives to choose matrix $\mathbf{M}$; such include the number of common neighbors between a pair of nodes, higher-order node proximity based on the \textit{Adamic-Adar} or \textit{Katz}  indices \cite{hope}, as well leveraging multi-hop information \cite{grarep}. Here, we will design $\mathbf{M}$ as a sparse binary matrix utilizing information of random walks over the network.

As we have mentioned above, random walk-based node embedding models (e.g., \cite{deepwalk, node2vec, biasedwalk, epasto-splitter, NguyenLRAKK18, node_embed_comm, InfiniteWalk-kdd20}) have received great attention because of their good prediction performance and efficiency on large scale-networks. Typically, those models generate a set of node sequences by simulating random walks; node representations are then learned by optimizing a model which defines the relationships between nodes and their \textit{contexts} within the walks. More formally, for a random walk  $\textbf{w}=(w_1,\ldots,w_{L})$, the \textit{context set} of \textit{center} node $w_l\in\mathcal{V}$ appearing at the position $l$ of the walk $\textbf{w}$ is defined by $\{ w_{l-\gamma},\ldots,w_{l-1},w_{l+1},\ldots,w_{l+\gamma}\}$,  where $\gamma$ is the \textit{window size} which is the furthest distance between the \textit{center} and \textit{context} nodes. 
The embedding vectors are then obtained by maximizing the likelihood of occurrences of nodes within the context of given center nodes. Here, we will also follow a similar random walk strategy, formulating the problem using a matrix factorization framework.

Let $\mathbf{M}_{(v,u)}$ be a binary value which equals to $1$ if  node $u$ appears in the context of $v$ in any walk. Also, let $\mathbf{F}_{(v,u)}$ be $2\cdot \gamma\cdot \#(v)$, where $\#(v)$ indicates the total number of occurrences of  node $v$ in the generated walks. Setting each term $\mathbf{W}_{(v,u)}$ as the square root of $\mathbf{F}_{(v,u)}$, the objective function in Eq. \eqref{eq:main_obj_func} can be expressed under a random walk-based formulation as follows:

\begin{align}
	&\argmin_{\mathbf{A},\mathbf{B}}\Big\Vert\mathbf{W}\odot \big(\mathbf{M} - \mathbf{A}\mathbf{B}^{\top} \big)\Big\Vert_{F}^2 + 
	\mathcal{R}(\mathbf{A}, \mathbf{B}) \nonumber\\
	=&\argmin_{\mathbf{A},\mathbf{B}}\Big\Vert\sqrt{\mathbf{F}}\odot \big(\mathbf{M} - \mathbf{A}\mathbf{B}^{\top} \big)\Big\Vert_{F}^2 + \mathcal{R}(\mathbf{A}, \mathbf{B})\nonumber \nonumber\\
	=&\argmin_{\mathbf{A},\mathbf{B}}\!\!\sum_{v \in \mathcal{V}}\sum_{u \in \mathcal{V}}\mathbf{F}_{(v,u)} \!\Big(\mathbf{M}_{(v,u)}\!-\! \big \langle \mathbf{A}[u],\mathbf{B}[v]\big \rangle \!\Big)^2  \!\!\!\!+\! \mathcal{R}(\mathbf{A}, \mathbf{B})\nonumber
	\end{align}\vspace{-0.4cm}
	\begin{align}
	=&\argmin_{\mathbf{A},\mathbf{B}}\!\!\sum_{v \in \mathcal{V}}\sum_{u \in \mathcal{V}}\!\!2\!\cdot\!\gamma\!\!\cdot\! \#(v) \!\Big(\mathbf{M}_{(v,u)}\!\!-\! \big \langle \mathbf{A}[u],\mathbf{B}[v]\big \rangle \!\Big)^2  \!\!\!\!\!+\! \mathcal{R}(\mathbf{A}, \mathbf{B})\nonumber\\
	=&\argmin_{\mathbf{A},\mathbf{B}}\!\!\sum_{v \in \mathcal{V}}\sum_{u \in \mathcal{V}}\sum_{s\in\mathcal{V}}\!\!\#(\!v,\!s)\!\Big(\!\!\left[u\!=\!s\right]\!-\! \big \langle \!\mathbf{A}[u],\mathbf{B}[v]\big \rangle \!\Big)^2  \!\!\!\!\!+\! \mathcal{R}(\mathbf{A}, \mathbf{B})\nonumber\\
	=&\argmin_{\mathbf{A},\mathbf{B}}\!\!\sum_{v \in \mathcal{V}}\sum_{s\in\mathcal{V}}\!\!\#(\!v,\!s)\!\!\sum_{u \in \mathcal{V}}\!\Big(\!\!\left[u\!=\!s\right]\!-\! \big \langle \!\mathbf{A}[u],\mathbf{B}[v]\big \rangle \!\Big)^2  \!\!\!\!\!+\! \mathcal{R}(\mathbf{A}, \mathbf{B})\nonumber\\
	=&\argmin_{\mathbf{A},\mathbf{B}}\!\!\sum_{(v,s) \in \mathcal{V}^2}\!\!\#(\!v,\!s)\!\!\sum_{u \in \mathcal{V}}\!\Big(\!\!\left[u\!=\!s\right]\!-\! \big \langle \!\mathbf{A}[u],\mathbf{B}[v]\big \rangle \!\Big)^2  \!\!\!\!\!+\! \mathcal{R}(\mathbf{A}, \mathbf{B})\nonumber
	\end{align}
	\vspace{-0.4cm}
	\begin{align}
	=&\argmin_{\mathbf{A},\mathbf{B}}\!\!\sum_{\textbf{w}\in\mathcal{W}}\sum_{l=1}^{L}\sum_{\substack{|j|\leq\gamma\\j\not=0}}^{\gamma}\sum_{u \in \mathcal{V}}\!\!\Big(\left[\{w_{l+j} = u\right] \!-\! \big \langle \mathbf{A}[u],\mathbf{B}[w_{l}]\big \rangle \Big)^2 \!\!+\nonumber \\&\hspace{6.25cm}\mathcal{R}(\mathbf{A}, \mathbf{B}), \label{eq:objective_without_kernel}
\end{align}

\noindent where each $\textbf{w}\in\mathcal{V}^{L}$ indicates a random walk of length $L$ in the collection $\mathcal{W}$, $\#(v,s)$ denotes the number of appearances of the center and context pairs $(v,s)$ in the collection $\mathcal{W}$, $\left[\cdot\right]$ is the the Iverson bracket, and $\mathcal{R}(\mathbf{A}, \mathbf{B})$ is the regularization term. Note that, in the equation above, the last line follows from the fact that $\#(v,s)$ is equal to zero for the pairs which are not center-context, and the term $\sum_{\textbf{w}\in\mathcal{W}}\sum_{l=1}^{L}\sum_{|j|\leq\gamma}^{\gamma}$ traverses all center-context pairs in the collection. Matrix $\mathbf{A}$ in Eq. \eqref{eq:objective_without_kernel} indicates the embedding vectors of nodes when they are considered as \textit{centers}; those will be the embeddings that are used in the experimental evaluation. The choice of matrix $\mathbf{M}$ and the reformulation of the  objective function, offers a computational advantage during the optimization step. Moreover, such formulation also allows us to further explore \textit{kernelized} version in order to exploit possible non-linearity of the model.

\section{Kernel-based Representation Learning}\label{sec:kernel}

Most matrix factorization techniques that aim
to find latent low-dimensional representations (e.g., \cite{netmf, netsmf-www2019, hope}), adopt the \textit{Singular Value Decomposition} (SVD) provides the best approximation of the objective function stated in Eq. \eqref{eq:main_obj_func},  as long as the weight matrix is uniform \cite{svd}. Nevertheless, in our case the weight matrix is not uniform, therefore  we need the exact realization of the target matrix in order to perform SVD. To overcome this limitation, we leverage kernel functions to learn node representations via matrix factorization.

Let $(\mathsf{X}, d_X)$ be a metric space and $\mathbb{H}$ be a Hilbert space of real-valued functions defined on $\mathsf{X}$. A Hilbert space is called  \textit{reproducing kernel Hilbert space (RKHS)} if the point evaluation map over $\mathbb{H}$ is a continuous linear functional. Furthermore, a \textit{feature map} is defined as a function $\Phi:\mathsf{X}\rightarrow\mathbb{H}$ from the input space $\mathsf{X}$ into \textit{feature space} $\mathbb{H}$. Every feature map defines a \textit{kernel} $\mathsf{K}:\mathsf{X}\times\mathsf{X} \rightarrow \mathbb{R}$ as follows:
\begin{align*}
    \mathsf{K}(\mathbf{x},\mathbf{y}) := \langle \Phi(\mathbf{x}), \Phi(\mathbf{y}) \rangle && \forall (\mathbf{x},\mathbf{y}) \in \mathsf{X}^2.
\end{align*}

\noindent It can be seen that $\mathsf{K}(\cdot,\cdot)$ is symmetric and positive definite due to the properties of an inner product space. 

A function $g:\mathsf{X}\rightarrow\mathbb{R}$ is \textit{induced by} $\mathsf{K}$, if there exists $h\in\mathbb{H}$ such that $g=\langle h,\Phi(\cdot) \rangle$ for a feature vector $\Phi$ of kernel $\mathsf{K}$. Note that, this is independent of the definition of the feature map $\Phi$ and space $\mathbb{H}$ \cite{kernel_steinwart}. Let $\mathcal{I}_{\kappa} := \{g:\mathsf{X}\rightarrow\mathbb{R} \ | \ \exists h\in\mathbb{H} \ \text{s.t.} \ g=\langle h,\Phi(\cdot)\rangle\}$ be the set of induced functions by kernel $\mathsf{K}$. Then, a continuous kernel $\mathsf{K}$ on a compact metric space $(\mathsf{X},d_{\mathsf{X}})$ is \textit{universal}, if the set $\mathcal{I}_{\kappa}$ is dense in the space of all continuous real-valued functions $\mathcal{C}(\mathsf{X})$. In other words, for any function $f \in \mathcal{C}(\mathsf{X})$ and $\epsilon > 0$, there exists $g_h \in \mathcal{I}_{\kappa}$ satisfying
\begin{align*}
    \left\Vert f-g_{h} \right\Vert_{\infty} \leq \epsilon,
\end{align*}
where $g_h$ is defined as $\langle h, \Phi(\cdot) \rangle$ for some $h \in \mathbb{H}$.

\par In this paper, we consider universal kernels, since we can always find $h \in \mathbb{H}$ satisfying $| \langle h, \phi(x_i) \rangle - \alpha_{i} | \leq \epsilon$ for given $\{x_1,\ldots, x_N\}\subset\mathsf{X}$, $\{\alpha_1,\ldots,\alpha_N\}\subset \mathbb{R}$ and $\epsilon > 0$ by Proposition \ref{proposition}. If we choose $\alpha_i$'s as the entries of a row of our target matrix $\mathbf{M}$, then the elements $h$ and $\phi(x_i)$ indicate the corresponding row vectors of $\mathbf{A}$ and $\mathbf{B}$, respectively. Then, we can obtain a decomposition of the target matrix by repeating the process for each row. However, the element $h$ might not always be in the range of feature map $\Phi$; in this case, we will approximate the correct values of $\mathbf{M}$.

\begin{proposition}[Universal kernels \cite{kernel_steinwart}]\label{proposition}
    Let $(\mathsf{X}, d)$ be a compact metric space and $\mathsf{K}(\cdot, \cdot)$ be a universal kernel on $\mathsf{X}$. Then, for all compact and mutually disjoint subsets $\mathsf{S}_1,\ldots,\mathsf{S}_n \subset \mathsf{X}$, all $\alpha_1,\ldots,\alpha_n$ $\in \mathbb{R}$, and all $\epsilon > 0$,  there exists a function $g$ induced by $\mathsf{K}$ with $\norm{g}_{\infty} \leq \max_i|\alpha_i| + \epsilon$ such that
    \begin{align*}
        \norm{g_{|\mathsf{S}} - \sum_{i=1}^{n}\alpha_i\mathbb{1}_{\mathsf{S}_i}}_{\infty} \leq \epsilon, 
    \end{align*}
    where $\mathsf{S} := \bigcup_{i=1}^n\mathsf{S}_i$ and $g_{|\mathsf{S}}$ is the restriction of $g$ to $\mathsf{S}$.
    \label{prp:1}
\end{proposition}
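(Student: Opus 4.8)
The plan is to reduce the statement to the defining density property of a universal kernel, by first replacing the discontinuous step function $\sum_{i=1}^n \alpha_i \mathbb{1}_{\mathsf{S}_i}$ with a genuine continuous function on $\mathsf{X}$ that agrees with it on $\mathsf{S}$ and carries the right sup-norm bound. Concretely, first I would define $f : \mathsf{S} \to \mathbb{R}$ by $f(x) = \alpha_i$ whenever $x \in \mathsf{S}_i$, where $\mathsf{S} = \bigcup_{i=1}^n \mathsf{S}_i$. The key point is that this $f$ is continuous on $\mathsf{S}$: since the $\mathsf{S}_i$ are finitely many, compact, and mutually disjoint, each $\mathsf{S}_i$ is closed in $\mathsf{S}$ and its relative complement $\bigcup_{j \neq i} \mathsf{S}_j$ is also closed, so each $\mathsf{S}_i$ is relatively clopen in $\mathsf{S}$; hence $f$ is locally constant, therefore continuous, with $\norm{f}_{\infty} = \max_i |\alpha_i|$.

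Next I would extend $f$ to all of $\mathsf{X}$ while keeping its magnitude under control. Because $(\mathsf{X}, d)$ is a compact metric space it is normal, and $\mathsf{S}$ is a closed (being a finite union of compact) subset; the Tietze extension theorem then yields a continuous $\tilde{f} \in \mathcal{C}(\mathsf{X})$ with $\tilde{f}_{|\mathsf{S}} = f$ and $\norm{\tilde{f}}_{\infty} = \norm{f}_{\infty} = \max_i |\alpha_i|$, using the form of Tietze that permits choosing the extension with the same bound on its magnitude. This is the step where the compactness and disjointness hypotheses are genuinely used: they guarantee both that the piecewise definition is continuous and that $\mathsf{S}$ is closed, so that Tietze applies.

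Finally I would invoke universality of $\mathsf{K}$. By definition the induced functions $\mathcal{I}_{\kappa}$ are dense in $\mathcal{C}(\mathsf{X})$ for $\norm{\cdot}_{\infty}$, so applied to $\tilde{f}$ there is $g = \langle h, \Phi(\cdot) \rangle \in \mathcal{I}_{\kappa}$ with $\norm{g - \tilde{f}}_{\infty} \leq \epsilon$. Both conclusions then drop out immediately. For the norm bound, the triangle inequality gives $\norm{g}_{\infty} \leq \norm{\tilde{f}}_{\infty} + \norm{g - \tilde{f}}_{\infty} \leq \max_i |\alpha_i| + \epsilon$. For the approximation claim, since $\tilde{f}$ coincides on $\mathsf{S}$ with the step function, $\norm{g_{|\mathsf{S}} - \sum_{i=1}^n \alpha_i \mathbb{1}_{\mathsf{S}_i}}_{\infty} = \norm{(g - \tilde{f})_{|\mathsf{S}}}_{\infty} \leq \norm{g - \tilde{f}}_{\infty} \leq \epsilon$, as required.

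The only delicate point, and the part I would be most careful about, is obtaining the two estimates simultaneously, namely guaranteeing the sharp bound $\norm{g}_{\infty} \leq \max_i |\alpha_i| + \epsilon$ rather than some larger constant. This is exactly what forces the extension $\tilde{f}$ to be built with the same magnitude bound as $f$: a naive continuous interpolant could overshoot $\max_i |\alpha_i|$ in the region between the $\mathsf{S}_i$ and thereby spoil the norm estimate, so the use of the norm-preserving form of Tietze (equivalently, truncating any continuous extension at height $\max_i |\alpha_i|$, which preserves continuity and the values on $\mathsf{S}$) is essential.
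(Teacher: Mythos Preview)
Your proof is correct and follows the standard argument: pass from the step function to a continuous extension via Tietze (using compactness and disjointness to ensure continuity on $\mathsf{S}$ and closedness of $\mathsf{S}$ in $\mathsf{X}$), then invoke the density definition of universality; the norm-preserving form of Tietze is indeed what secures the bound $\norm{g}_\infty \le \max_i|\alpha_i| + \epsilon$.

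Note, however, that the paper does not actually supply its own proof of this proposition: it is quoted as a known result from Steinwart's work (the citation \texttt{kernel\_steinwart}), and the paper only proves the subsequent injectivity lemma. So there is no in-paper argument to compare against; your approach is essentially the one found in the cited reference.
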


Universal kernels also provide a guarantee for the injectivity of the feature maps, as shown in Lemma \ref{lemma:injectivity}; therefore, we can always find $y\in \mathsf{X}$, such that $\Phi(y)=h$ if $h \in \Phi(\mathsf{X})$. Otherwise, we can learn an approximate pre-image solution by using a gradient descent technique \cite{preimage_problem}.

\begin{lemma}[\cite{kernel_steinwart}]\label{lemma:injectivity}
Every feature map of a universal kernel is injective.
\end{lemma}
\begin{proof}
Let $\Phi(\cdot):\mathsf{X}\rightarrow \mathbb{H}$ be a feature vector of the kernel $\mathsf{K}(\cdot, \cdot)$. Assume that $\Phi$ is not injective, so we can find a pair of distinct elements $x, y\in \mathsf{X}$ such that $\Phi(x)=\Phi(y)$ and $x \not= y$. By Proposition \ref{proposition}, for any given $\epsilon > 0$, there exists a function $g=\langle h, \Phi(\cdot)  \rangle$ induced by $\mathsf{K}$ for some $h\in\mathbb{H}$, which satisfies
\begin{align*}
    \big\Vert  g_{|S} - ( \mathbb{1}_{S_1} - \mathbb{1}_{S_2} ) \big\Vert_{\infty} \leq \epsilon,
\end{align*}
where $\mathsf{S} := \mathsf{S}_1 \cup  \mathsf{S}_2 $ for  the compact sets $\mathsf{S}_1 = \{x\}$ and $\mathsf{S}_2 = \{y\}$. Then, we have that $|  \langle \Phi(x), h \rangle -  \langle \Phi(y), h \rangle ) | \geq 2-2\epsilon$. In other words, we obtain $\Phi(x) \not= \Phi(y)$, which contradicts our initial assumption.
\end{proof}

\begin{figure*}[t]
\centering
\includegraphics[width=0.94\textwidth]{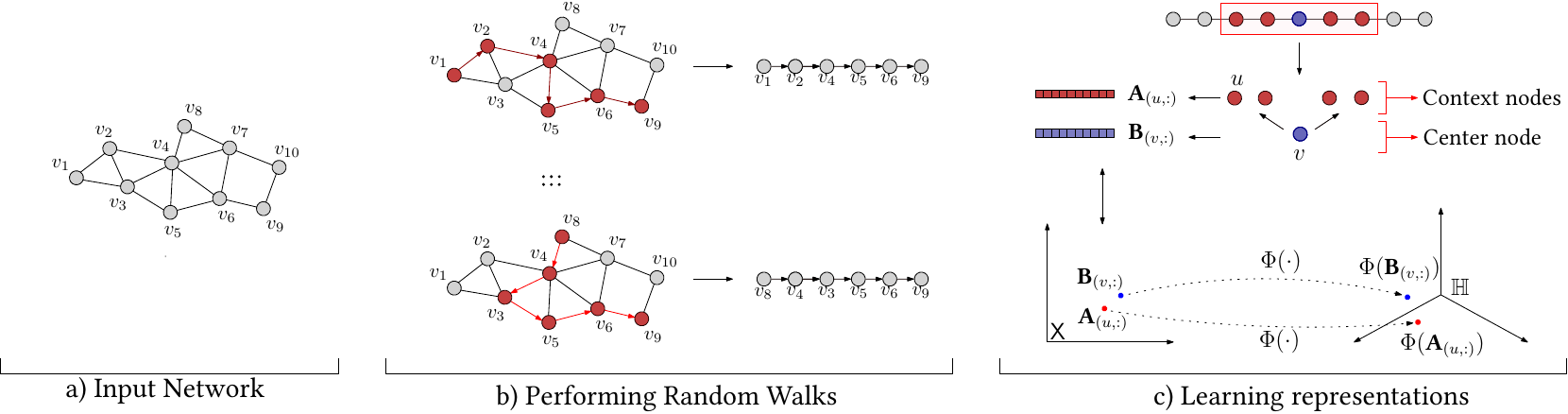}
\caption{Schematic representation of the \textsc{KernelNE} model. Node sequences are firstly generated by following a random walk strategy. By using the co-occurrences of node pairs within a certain window size, node representations are learned by optimizing their maps in the feature space.} \label{fig:main-figure}
\end{figure*}

\subsection{Single Kernel Node Representation Learning}\label{subsec:single}

Following the kernel formulation described above, we can now perform matrix factorization in the feature space by leveraging kernel functions. In particular, we can move the inner product from the input space $\textsf{X}$ to the feature space $\mathbb{H}$, by reformulating Eq. \eqref{eq:objective_without_kernel} as follows:

\begin{align}
	&\!\!\!\!\!\!\!\!\!\argmin_{\mathbf{A},\mathbf{B}}\!\!\!\sum_{\textbf{w}\in\mathcal{W}}\sum_{l=1}^{L}\sum_{\substack{|j|\leq\gamma\\j\not=0}}\sum_{u \in \mathcal{V}}\!\Big(\! \!\!\left[w_{l+j} \!\!=\! u\right]  \!\!-\! \big \langle \Phi(\mathbf{A}_{(u,:)}),\Phi(\mathbf{B}_{(w_{l},:)})\big \rangle\! \Big)^2\nonumber\\
	&\hspace{6.0cm}+ \mathcal{R}(\mathbf{A}, \mathbf{B}) \nonumber\\
	=&\argmin_{\mathbf{A},\mathbf{B}}\!\!\!\sum_{\textbf{w}\in\mathcal{W}}\sum_{l=1}^{L}\sum_{\substack{|j|\leq\gamma\\j\not=0}}\sum_{u \in \mathcal{V}}\!\Big(\!\!\left[w_{l+j} \!\!=\! u\right]  \!-\! \mathsf{K}\big(\mathbf{A}_{(u,:)},\mathbf{B}_{(w_{l},:)}\big)\! \Big)^2\nonumber\\ 
	& \hspace{6.0cm} +\mathcal{R}(\mathbf{A}, \mathbf{B}). \label{eq:main_obj_kernel}
\end{align}

\noindent In this way, we obtain a kernelized matrix factorization model for node embeddings based on random walks. For the numerical evaluation of our method, we use the following universal kernels \cite{universal_kernels, kernel_steinwart}:

\begin{align*}
    \mathsf{K}_{G}(\mathbf{x},\mathbf{y}) & = \exp\left(\frac{-\left\Vert \mathbf{x}-\mathbf{y}\right\Vert^2}{\sigma^2}\right) && \sigma \in \mathbb{R}\\
    \mathsf{K}_{S}(\mathbf{x},\mathbf{y}) & = \frac{1}{\left( 1 + \left\Vert \mathbf{x}-\mathbf{y} \right\Vert^2\right)^{\sigma}} && \sigma \in \mathbb{R}_+,
\end{align*}

\noindent where $\mathsf{K}_{G}$ and $\mathsf{K}_{S}$ correspond to the \textit{Gaussian} and \textit{Schoenberg} kernels respectively. We will refer to the proposed kernel-based node embeddings methodology as \textsc{KernelNE} (the two different kernels will be denoted by \textsc{Gauss} and \textsc{Sch}). A schematic representation of the basic components of the proposed model is given in Fig. \ref{fig:main-figure}.

\subsubsection{Model Optimization} 

The estimation problem for both parameters $\mathbf{A}$ and $\mathbf{B}$ is, unfortunately, non-convex. Nevertheless,  when we consider each parameter separately by fixing the other one, it turns into a convex problem. By taking advantage of this property, we employ \textit{Stochastic Gradient Descent} (SGD) \cite{sgd} in the optimization step of each embedding matrix. Note that, for each context node $w_{l+j}$ in Eq. \eqref{eq:main_obj_kernel}, we have to compute the gradient for each node $u\in \mathcal{V}$, which is computationally intractable. However, Eq. \eqref{eq:main_obj_kernel} can be divided into two parts with respect to the values of $\left[w_{l+j} = u\right]\!\in\!\{0,1\}$, as follows:

\begin{align*}
    &\sum_{u \in \mathcal{V}}\Big(\left[w_{l+j} \!\!=\! u\right] - \mathsf{K}\big (\mathbf{A}_{(u,:)},\mathbf{B}_{(w_{l},:)}\big) \Big)^2 \\
    =&\Big(\!1 \!-\! \mathsf{K}\big(\mathbf{A}_{(u^+,:)},\mathbf{B}_{(w_{l},:)}\big) \!\!\Big)^2 \!\!+\!\!\!\!\!\!\!\!\! \sum_{u^{-}\in\mathcal{V} \backslash \{w_{l+j}\} }\!\!\!\!\!\!\!\!\!\big( \mathsf{K}\big(\mathbf{A}_{(u^-,:)},\mathbf{B}_{(w_{l},:)}\big) \big)^2\\
    \approx&\Big(\! 1 \!\!-\!\! \mathsf{K}\big(\mathbf{A}_{(u^+,:)},\mathbf{B}_{(w_{l},:)}\!\big)\!\!\Big)^2 \!\!\!\!\!+\! \large|\mathcal{V}\backslash \{w_{l+j}\}\large|\!\!\!\!\!\underset{u^{-}\sim \ p^- }{\!\!\!\!\!\mathbb{E}}\!\!\!\!\!\!\!\!\!\big[ \mathsf{K}(\mathbf{A}_{(u^-,:)},\mathbf{B}_{(w_{l},:)}) \big]^2\\
    =&\underbrace{\mystrut{2.0ex}\big( 1 - \mathsf{K}\big(\mathbf{A}_{(u^+,:)},\mathbf{B}_{(w_{l},:)}\big)^2}_{\text{positive sample}} \!\!+ \underbrace{k\!\!\!\!\underset{u^{-}\sim p^- }{\mathbb{E}}\big[ \mathsf{K}(\mathbf{A}_{(u^-,:)},\mathbf{B}_{
    (w_{l},:)}) \big]^2}_{\text{negative sample}},
\end{align*}

\noindent where $k:=|\mathcal{V}|-1$ and $u^+ := w_{l+j}$. To this end, we apply \textit{negative sampling} \cite{word2vec} which is a variant of \textit{noise-contrastive estimation} \cite{nce}, proposed as an alternative to solve the computational problem of hierarchical softmax. For each context node $u^+ \in \mathcal{C}_{\textbf{w}}(w_{l})$, we sample $k$ negative instances $u^-$ from the noise distribution $p^-$. Then, we can rewrite the  objective function Eq. \eqref{eq:main_obj_kernel} in the following way:
\begin{align}
\mathcal{F}_{S} :=& \argmin_{\mathbf{A},\mathbf{B}}\!\sum_{\boldsymbol{w}\in\mathcal{W}}\sum_{l=1}^L\sum_{\substack{|j|\leq\gamma\\j\not=0}}\!\bigg( \big(1 \!\!- \mathsf{K}(\mathbf{A}_{(w_{l+j},:)},\mathbf{B}_{(w_l,:)})\big)^2 \nonumber\\
&\hspace{0.6cm}+\sum_{\substack{r=1\\u_r^{-} \sim p^-}}^k\!\!\mathsf{K}\big(\mathbf{A}_{(u_r^-,:)},\!\mathbf{B}_{(w_l,:)} \big)^2 \bigg) + \mathcal{R}(\mathbf{A},\mathbf{B}).\label{eq:kernelne}
\end{align}

\noindent Equation \eqref{eq:kernelne} corresponds to the objective function of the proposed \textsc{KernelNE} model. In the following subsection, we will study how this model could be further extended to leverage multiple kernels.

\subsection{Multiple Kernel Node Representation Learning}
Selecting a proper kernel function $\mathsf{K}(\cdot,\cdot)$ and the corresponding parameters (e.g., the bandwidth of a Gaussian kernel) is a critical task during the learning phase.  Nevertheless, choosing a single kernel function might impose potential bias, causing limitations on the performance of the model. Having the ability to properly utilize multiple kernels could increase the expressiveness of the model, capturing different notions of similarity among embeddings \cite{mkl-jmlr11}. Besides, learning how to combine such kernels, might further improve the performance of the underlying model. In particular, given a set of base kernels $\{\mathsf{K}_i\}_{i=1}^{K}$,  we aim to find an optimal way to combine them, as follows:

\begin{equation*}
    \mathsf{K}^{\mathbf{c}} (\mathbf{x}, \mathbf{y}) = f_\mathbf{c} \big(\{\mathsf{K}_i(\mathbf{x}, \mathbf{y})\}_{i=1}^{K}  | \mathbf{c} \big),
\end{equation*}

\noindent where the combination function $f_{\mathbf{c}}$ is parameterized on $\mathbf{c} \in \mathbb{R}^K$ that indicates kernel weights. Due to the generality of the multiple kernel learning framework, $f_{\mathbf{c}}$ can be either a linear or nonlinear function. 

\par In this paragraph, we examine how to further strengthen the proposed kernelized weighted matrix factorization, by \textit{linearly} combining multiple kernels. Let $\mathsf{K}_1, \ldots, \mathsf{K}_K$ be a set of kernel functions satisfying the properties presented in the previous paragraph. Then, we can restate the objective function as follows:
\begin{align}
&\mathcal{F}_{M}\!:=\!\argmin_{\mathbf{A},\mathbf{B},\textbf{c}}\!\!\!\sum_{\boldsymbol{w}\in\mathcal{W}}\sum_{l=1}^L\sum_{\substack{|j|\leq \gamma\\j\not=0}}\!\Big( \!1 \!\!- \!\!\sum_{i=1}^{K}c_i\mathsf{K}_i(\mathbf{A}_{(w_{l+j},:)},\mathbf{B}_{(w_l,:)})\big)^2 \nonumber\\ 
&\hspace{3.0cm}+\!\!\!\!\sum_{\substack{r=1\\u_r^{-} \sim p^-}}^k\!\!\!\!\!\big(\sum_{i=1}^{K}\!c_i\mathsf{K}_i(\!\mathbf{A}_{(u_r^-\!\!,:)},\!\mathbf{B}_{(w_l,:)})\big)^2 \Big)\nonumber\\
&\hspace{3.0cm}+ \frac{\lambda}{2}\left( \Vert \mathbf{A} \Vert_F^2 \!+  \Vert \mathbf{B} \Vert_F^2\!\right) + \frac{\beta}{2}\Vert \textbf{c} \Vert_2^2,
\label{eq:mkernelne}
\end{align}
\noindent where $\textbf{c}:=[c_1,\ldots,c_K]^{\top}\in \mathbb{R}^K$. Here, we introduce an additional parameter $c_i$ representing the contribution of the corresponding kernel $\textsf{K}_i$. $\beta>0$ is a trade-off parameter, and similarly, the coefficients $c_1,\ldots,c_K$ are optimized by fixing the remaining model parameters $\mathbf{A}$ and $\mathbf{B}$. Equation \eqref{eq:mkernelne} corresponds to the objective function of the proposed multiple kernel learning model \textsc{MKernelNE}. Unlike the common usage of multiple kernels methods \cite{mkl-jmlr11}, here we do not constrain the coefficients to be non-negative. We interpret each entry of the target matrix as a linear combination of inner products of different kernels' feature maps. As discussed in the previous sections, our main intuition relies on obtaining more expressive embeddings by projecting the factorization step into a higher dimensional space.

\par Algorithm \ref{alg:multikernel} summarizes the pseudocode of the proposed approach. For a given collection of random walks $\mathcal{W}$, we first determine the center-context node pairs $(w_{l}, w_{l+j})$ in the node sequences. Recall that, for each \textit{center} node in a walk, its surrounding nodes within a certain distance $\gamma$,  define its \textit{context}. Furthermore, the corresponding embedding vectors $\mathbf{B}_{(w_{l},:)}$ of center node $w_{l}$ and $\mathbf{A}_{(w_{l+j},:)}$ of context $w_{l+j}$ are updated by following the rules which we describe in detail below. Note that, we obtain two different representations, $\mathbf{A}_{(v,:)}$ and $\mathbf{B}_{(v,:)}$, for each node $v\in\mathcal{V}$ since the node can have either a center or context role in the walks.

\par The gradients in Alg. \ref{alg:updateemb} are given below. For notation simplicity, we denote each $\mathsf{K}_i(\mathbf{A}_{(u,:)},\mathbf{B}_{(v,:)})$ by $\mathsf{K}_i(u,v)$.
\begin{align*}
&\nabla_{\mathbf{A}_{(x,:)}}\mathcal{F}_M \!\!:=\!
\left[x \!=\! u\right]\!\big(\!\!-\!\!2\!\sum_{i=1}^{K}\!c_i\nabla_{\mathbf{A}_{(x,:)}}\!\!\mathsf{K}_i(x,v)\!\big)\!\big(\!1 \!\!-\!\! \sum_{j=1}^{K}\!c_i\mathsf{K}_i(x,v)\!\big)\\
& \hspace{0.6cm}+2\!\!\!\!\sum_{\substack{r=1\\u_r^{-} \sim p^-}}^k\!\!\!\!\left[x \!\!=\! u_r^-\right]\!\!\big(\sum_{i=1}^{K}c_i\underset{\mathbf{A}_{(u_r^-,:)}}{\nabla}\!\!\!\!\!\mathsf{K}_i(u_r^-,v)\mathsf{K}_i(u_r^-,v)\big) \!+ \!\lambda\mathbf{A}_{(x,:)}\\
&\nabla_{\mathbf{B}_{(v,:)}}\mathcal{F}_M \!\!:=\!-2\big(\!\sum_{i=1}^{K}\!c_i\nabla_{\mathbf{B}_{(v,:)}}\mathsf{K}_i(u,v)\!\big)\!\big(\!1 \!\!-\!\! \sum_{i=1}^{K}\!c_i\mathsf{K}_i(u,v)\!\big) \\
& \hspace{1.6cm}+2\!\!\!\!\!\sum_{\substack{r=1\\u_r^{-} \sim p^-}}^k\!\!\!\!\!\big(\sum_{i=1}^{K}c_i\nabla_{\mathbf{B}_{(v,:)}}\mathsf{K}_i(u_r^-,v)\mathsf{K}_i(u_r^-,v)\big) \!+\! \lambda\mathbf{B}_{(v,:)}\\
&\nabla_{c_{t}}\mathcal{F}_M \!\!=\!-2 \ \big(\!1 \!\!-\!\! \sum_{i=1}^{K}\!c_i\mathsf{K}_i(u,v)\!\big)\mathsf{K}_t(u,v) \\
& \hspace{2.6cm}+\!2\!\!\!\!\sum_{\substack{r=1\\u_r^{-} \sim p^-}}^k\!\!\!\!\big(\sum_{i=1}^{K}\!c_i\mathsf{K}_i(u_r^-, v)\big)\mathsf{K}_t(u_r^-,v) + \beta c_{t}
\end{align*}

\subsection{Complexity Analysis}
For the generation of walks, the biased random walk strategy proposed in \textsc{Node2Vec} is used, which  can be performed in $\mathcal{O}(|\mathcal{W}|\cdot L)$ steps \cite{node2vec} for the pre-computed transition probabilities, where $L$ indicates the walk length and $\mathcal{W}$ denotes the set of walks. For the algorithm's learning procedure, we can carry out Line $5$ of Algorithm \ref{alg:multikernel} at most $2\gamma\cdot |\mathcal{W}|\cdot L$ times for each center-context pair, where $\gamma$ represents the window size. The dominant operation in Algorithm \ref{alg:updateemb} is the multiplication operation of update rules in Lines $6$, $7$, and $9$; the running time can be bounded by $\mathcal{O}(k \cdot K \cdot d)$, where $k$ is the number of negative samples generated per center-context pair, $K$ is the number of kernels, and $d$ is the representation size. To sum up, the overall running time of the proposed approach can be bounded by $\mathcal{O}(\gamma \cdot |\mathcal{W}|\cdot L \cdot K \cdot d \cdot k)$ steps.

\begin{algorithm}[t]
\begin{algorithmic}[1]
\REQUIRE Graph $G=(\mathcal{V},\mathcal{E})$\\ \hspace{0.45cm} Representation size $d$\\
\hspace{0.45cm} Set of walks $\mathcal{W}$\\
\hspace{0.45cm} Window size $\gamma$\\
\hspace{0.45cm} Kernel function $\mathsf{K}$ \\
\hspace{0.45cm} Kernel parameter(s) $\sigma$\\
\ENSURE Embedding matrix $\mathbf{A}$
\STATE  Initialize matrices $\mathbf{A}$,
$\mathbf{B}\in\mathbb{R}^{n\times d}$ \\
\textcolor{gray}{\texttt{\textbf{/* Extract center-context node pairs */}}}
\FORALL{$\textbf{w}=(w_1,\ldots,w_L) \in \mathcal{W}$}
\FOR{$l \gets 1$ \TO $L$}
\FOR{$j\not=0 \gets$ $-\gamma$ \TO $\gamma$}
\STATE $\mathbf{A}, \mathbf{B}, \mathbf{c} \leftarrow \textsc{UpdateEmb}(\mathbf{A}, \mathbf{B},\textbf{c}, w_l, w_{l+j}, \mathsf{K}, \sigma)$
\ENDFOR
\ENDFOR
\ENDFOR
\end{algorithmic}
\caption{\textsc{MKernelNE}}\label{alg:multikernel}
\end{algorithm}
\begin{algorithm}[t]
\begin{algorithmic}[1]
\REQUIRE Graph $G=(\mathcal{V},\mathcal{E})$\\ 
\hspace{0.45cm} Embedding matrices $\mathbf{A}$ and $\mathbf{B}$\\
\hspace{0.45cm} Kernel coefficients $\mathbf{c}=(c_1,\ldots,c_K)$\\
\hspace{0.45cm} Kernel function(s) $\mathsf{K}$ \\
\hspace{0.45cm} Center and context nodes $v$ and $u$\\
\hspace{0.45cm} Learning rate $\eta$\\
\hspace{0.45cm} Distribution for generating negative samples $p^-$\\
\ENSURE Embedding matrix $\mathbf{A}$
\STATE node\_list $\gets$ $[u]$ \\
\textcolor{gray}{\texttt{\textbf{/* Extract negative samples */}}}
\FOR{$s \gets 1$ \TO $k$}
\STATE node\_list $\gets$ \textsc{SampleNode}$(p^-)$ 
\ENDFOR \\
{\small{\textcolor{gray}{\texttt{\textbf{/* Update embedding vectors */}}}}}
\FOR{\textbf{each} $x$ in node\_list}
\STATE $\mathbf{A}_{(x,:)} \leftarrow \mathbf{A}_{(x,:)} -  \eta\nabla_{\mathbf{A}_{(x,:)}} \mathcal{F}_M$
\STATE $\mathbf{B}_{(v,:)} \leftarrow \mathbf{B}_{(v,:)} -  \eta\nabla_{\mathbf{B}_{(v,:)}} \mathcal{F}_M$ \\
{\small{\textcolor{gray}{\texttt{\textbf{/* Update individual kernel weights */}}}}}
\IF{number of kernels $>$ 1}
\STATE  $\mathbf{c} \leftarrow \mathbf{c} -  \eta\nabla_{c} \mathcal{F}_M$
\ENDIF
\ENDFOR
\end{algorithmic}
\caption{\textsc{UpdateEmb}}\label{alg:updateemb}
\end{algorithm}
\section{Experiments}
This section presents the experimental set-up details, the datasets, and the baseline methods used in the evaluation. The performance of the proposed single and multiple kernel models is examined for  node classification and link prediction tasks on various real-world datasets. The experiments have been performed on a computer with 16Gb RAM.

\subsection{Baseline Methods}
We consider nine baseline models to compare the performance of our approach. (i) \textsc{DeepWalk} \cite{deepwalk} performs uniform random walks to generate the contexts of nodes; then, the \textsc{Skip-Gram} model \cite{word2vec} is used to learn node embeddings. (ii) \textsc{Node2Vec} \cite{node2vec} combines \textsc{Skip-Gram} with biased random walks, using two extra parameters that control the walk in order to simulate a \textit{BFS} or \textit{DFS} exploration. In the experiments, we set those parameters to $1.0$. In our approach we sample context nodes using this biased random walk strategy. (iii) \textsc{LINE} \cite{line} learns nodes embeddings relying on first- and second-order proximity information of nodes. (iv) \textsc{HOPE} \cite{hope} is a matrix factorization approach aiming at capturing similarity patterns based on a higher-order node similarity measure. In the experiments, we consider the \textit{Katz} index, since it demonstrates the best performance among other proximity indices. (v) \textsc{NetMF} \cite{netmf} targets to factorize the matrix approximated by pointwise mutual information of center and context pairs. The experiments have been conducted for large window sizes ($\gamma=10$) due to its good performance. (vi) \textsc{VERSE} \cite{verse-www18} learns the embedding vectors by optimizing similarities among nodes. As suggested by the authors, we set $\alpha=0.85$ for the value of the hyper-parameter in the experiments. (vii) \textsc{ProNE} \cite{prone} learns the representations by relying on an efficient sparse matrix factorization and the extracted embeddings are improved with spectral propagation operations. (viii) \textsc{GEMSEC} \cite{gemsec_asonam_bedenek} leverages the community structure of real-world graphs, learning node embeddings and the cluster assignments simultaneously. We have used the best performing number of cluster value from the set $\{5,10, 15, 25, 50, 75, 100\}$. (ix) Lastly, \textsc{M-NMF} \cite{mnmf_aaai17_wang} extracts node embeddings under a  modularity-based community detection framework based on non-negative matrix factorization. We have observed that the algorithm poses good performance by setting its parameters $\alpha=0.1$ and $\beta=5$. We performed parameter tuning for the number of communities using values from the following set: $\{5, 15, 20, 25, 50, 75, 100\}$.

Those baseline methods are compared against instances of \textsc{KernelNE} and \textsc{MKernelNE} using different kernel functions (\textsc{Gauss} and \textsc{Sch}).

\begin{table}
\caption{Statistics of networks $|\mathcal{V}|$: number of nodes, $|\mathcal{E}|$: number of edges, $|\mathcal{K}|$: number of labels and $|\mathcal{C}c|$: number of connected components.}
\label{tab:networks}
\resizebox{\columnwidth}{!}{%
\begin{tabular}{rccccc}
\toprule
\multicolumn{1}{l}{} & \textbf{$|\mathcal{V}|$} & \textbf{$|\mathcal{E}|$} & \textbf{$|\mathcal{K}|$} & \textbf{$|\mathcal{C}c|$} & \textbf{Avg. Degree}  \\\midrule
\textsl{CiteSeer} & 3,312 & 4,660 & 6 & 438 & 2.81  \\
\textsl{Cora} & 2,708 & 5,278 & 7 & 78 & 3.90   \\
\textsl{DBLP} & 27,199 & 66,832 & 4 & 2,115 & 4.91  \\
\textsl{PPI} & 3,890 & 38,739 & 50 & 35 & 19.92 \\\midrule
\textsl{AstroPh} & 17,903 & 19,7031 & - & 1 & 22.01  \\
\textsl{HepTh} & 8,638 & 24,827 & - & 1 & 5.74 \\
\textsl{Facebook} & 4,039 & 88,234 & - & 1 & 43.69 \\
\textsl{Gnutella} & 8,104 & 26,008 & - & 1 & 6.42 \\\bottomrule
\end{tabular}%
}
\end{table}

\begin{table*}[]
\caption{Node classification task for varying training sizes on \textsl{Citeseer}. For each method, the rows show the Micro-$F_1$ and Macro-$F_1$ scores, respectively.}
\label{tab:classification_citeseer}
\centering
\resizebox{\textwidth}{!}{%
\begin{tabular}{crcccccccccccccccccc}\toprule
& & \textbf{1\%} & \textbf{2\%} & \textbf{3\%} & \textbf{4\%} & \textbf{5\%} & \textbf{6\%} & \textbf{7\%} & \textbf{8\%} & \textbf{9\%} & \textbf{10\%} & \textbf{20\%} & \textbf{30\%} & \textbf{40\%} & \textbf{50\%} & \textbf{60\%} & \textbf{70\%} & \textbf{80\%} & \textbf{90\%} \\\midrule
& & 0.367 & 0.421 & 0.445 & 0.462 & 0.475 & 0.488 & 0.496 & 0.502 & 0.508 & 0.517 & 0.550 & 0.569 & 0.580 & 0.587 & 0.593 & 0.596 & 0.597 & 0.597 \\
& \multirow{-2}{*}{$\textsc{DeepWalk}$} & \cellcolor[HTML]{EFEFEF}0.313 & \cellcolor[HTML]{EFEFEF}0.374 & \cellcolor[HTML]{EFEFEF}0.402 & \cellcolor[HTML]{EFEFEF}0.419 & \cellcolor[HTML]{EFEFEF}0.434 & \cellcolor[HTML]{EFEFEF}0.446 & \cellcolor[HTML]{EFEFEF}0.455 & \cellcolor[HTML]{EFEFEF}0.461 & \cellcolor[HTML]{EFEFEF}0.467 & \cellcolor[HTML]{EFEFEF}0.476 & \cellcolor[HTML]{EFEFEF}0.506 & \cellcolor[HTML]{EFEFEF}0.524 & \cellcolor[HTML]{EFEFEF}0.534 & \cellcolor[HTML]{EFEFEF}0.540 & \cellcolor[HTML]{EFEFEF}0.544 & \cellcolor[HTML]{EFEFEF}0.547 & \cellcolor[HTML]{EFEFEF}0.547 & \cellcolor[HTML]{EFEFEF}0.546 \\
& & 0.404 & 0.451 & 0.475 & 0.491 & 0.504 & 0.514 & 0.523 & 0.529 & 0.537 & 0.543 & 0.571 & 0.583 & 0.591 & 0.595 & 0.600 & 0.600 & 0.601 & 0.603 \\
& \multirow{-2}{*}{$\textsc{Node2Vec}$} & \cellcolor[HTML]{EFEFEF}0.342 & \cellcolor[HTML]{EFEFEF}0.397 & \cellcolor[HTML]{EFEFEF}0.424 & \cellcolor[HTML]{EFEFEF}0.443 & \cellcolor[HTML]{EFEFEF}0.456 & \cellcolor[HTML]{EFEFEF}0.466 & \cellcolor[HTML]{EFEFEF}0.476 & \cellcolor[HTML]{EFEFEF}0.483 & \cellcolor[HTML]{EFEFEF}0.489 & \cellcolor[HTML]{EFEFEF}0.497 & \cellcolor[HTML]{EFEFEF}0.525 & \cellcolor[HTML]{EFEFEF}0.535 & \cellcolor[HTML]{EFEFEF}0.542 & \cellcolor[HTML]{EFEFEF}0.546 & \cellcolor[HTML]{EFEFEF}0.549 & \cellcolor[HTML]{EFEFEF}0.549 & \cellcolor[HTML]{EFEFEF}0.549 & \cellcolor[HTML]{EFEFEF}0.550 \\
& & 0.253 & 0.300 & 0.332 & 0.353 & 0.369 & 0.384 & 0.395 & 0.407 & 0.412 & 0.418 & 0.459 & 0.476 & 0.487 & 0.494 & 0.500 & 0.505 & 0.507 & 0.512 \\
& \multirow{-2}{*}{$\textsc{LINE}$} & \cellcolor[HTML]{EFEFEF}0.183 & \cellcolor[HTML]{EFEFEF}0.243 & \cellcolor[HTML]{EFEFEF}0.280 & \cellcolor[HTML]{EFEFEF}0.303 & \cellcolor[HTML]{EFEFEF}0.321 & \cellcolor[HTML]{EFEFEF}0.334 & \cellcolor[HTML]{EFEFEF}0.346 & \cellcolor[HTML]{EFEFEF}0.357 & \cellcolor[HTML]{EFEFEF}0.363 & \cellcolor[HTML]{EFEFEF}0.367 & \cellcolor[HTML]{EFEFEF}0.407 & \cellcolor[HTML]{EFEFEF}0.423 & \cellcolor[HTML]{EFEFEF}0.434 & \cellcolor[HTML]{EFEFEF}0.440 & \cellcolor[HTML]{EFEFEF}0.445 & \cellcolor[HTML]{EFEFEF}0.448 & \cellcolor[HTML]{EFEFEF}0.450 & \cellcolor[HTML]{EFEFEF}0.454 \\
& & 0.198 & 0.197 & 0.201 & 0.204 & 0.205 & 0.207 & 0.211 & 0.208 & 0.212 & 0.216 & 0.235 & 0.253 & 0.265 & 0.276 & 0.288 & 0.299 & 0.304 & 0.316 \\
& \multirow{-2}{*}{$\textsc{HOPE}$} & \cellcolor[HTML]{EFEFEF}0.064 & \cellcolor[HTML]{EFEFEF}0.060 & \cellcolor[HTML]{EFEFEF}0.061 & \cellcolor[HTML]{EFEFEF}0.065 & \cellcolor[HTML]{EFEFEF}0.065 & \cellcolor[HTML]{EFEFEF}0.066 & \cellcolor[HTML]{EFEFEF}0.070 & \cellcolor[HTML]{EFEFEF}0.068 & \cellcolor[HTML]{EFEFEF}0.072 & \cellcolor[HTML]{EFEFEF}0.075 & \cellcolor[HTML]{EFEFEF}0.099 & \cellcolor[HTML]{EFEFEF}0.121 & \cellcolor[HTML]{EFEFEF}0.136 & \cellcolor[HTML]{EFEFEF}0.150 & \cellcolor[HTML]{EFEFEF}0.164 & \cellcolor[HTML]{EFEFEF}0.178 & \cellcolor[HTML]{EFEFEF}0.186 & \cellcolor[HTML]{EFEFEF}0.202 \\
&  & {\ul 0.434} & {0.476} & {0.489} & {0.512} & {0.517} & {0.532} & {0.538} & {0.536} & {0.551} & {0.554} & {0.576} & {0.590} & {0.594} & {0.607} & {0.614} & {0.615} & {0.616} & {\textbf{0.620}} \\
 & \multirow{-2}{*}{$\textsc{VERSE}$} & \cellcolor[HTML]{EFEFEF}{\textbf{0.377}} & \cellcolor[HTML]{EFEFEF}{0.423} & \cellcolor[HTML]{EFEFEF}{0.446} & \cellcolor[HTML]{EFEFEF}{0.462} & \cellcolor[HTML]{EFEFEF}{0.466} & \cellcolor[HTML]{EFEFEF}{0.484} & \cellcolor[HTML]{EFEFEF}{0.490} & \cellcolor[HTML]{EFEFEF}{0.492} & \cellcolor[HTML]{EFEFEF}{0.505} & \cellcolor[HTML]{EFEFEF}{0.507} & \cellcolor[HTML]{EFEFEF}{0.526} & \cellcolor[HTML]{EFEFEF}{0.545} & \cellcolor[HTML]{EFEFEF}{0.545} & \cellcolor[HTML]{EFEFEF}{0.558} & \cellcolor[HTML]{EFEFEF}{0.563} & \cellcolor[HTML]{EFEFEF}{0.564} & \cellcolor[HTML]{EFEFEF}{0.565} & \cellcolor[HTML]{EFEFEF}{\textbf{0.567}} \\
 &  & {0.244} & {0.287} & {0.344} & {0.351} & {0.376} & {0.394} & {0.417} & {0.434} & {0.444} & {0.452} & {0.514} & {0.530} & {0.547} & {0.554} & {0.563} & {0.562} & {0.554} & {0.558} \\
 & \multirow{-2}{*}{$\textsc{ProNE}$} & \cellcolor[HTML]{EFEFEF}{0.172} & \cellcolor[HTML]{EFEFEF}{0.225} & \cellcolor[HTML]{EFEFEF}{0.275} & \cellcolor[HTML]{EFEFEF}{0.294} & \cellcolor[HTML]{EFEFEF}{0.313} & \cellcolor[HTML]{EFEFEF}{0.336} & \cellcolor[HTML]{EFEFEF}{0.358} & \cellcolor[HTML]{EFEFEF}{0.381} & \cellcolor[HTML]{EFEFEF}{0.383} & \cellcolor[HTML]{EFEFEF}{0.395} & \cellcolor[HTML]{EFEFEF}{0.454} & \cellcolor[HTML]{EFEFEF}{0.470} & \cellcolor[HTML]{EFEFEF}{0.488} & \cellcolor[HTML]{EFEFEF}{0.498} & \cellcolor[HTML]{EFEFEF}{0.502} & \cellcolor[HTML]{EFEFEF}{0.502} & \cellcolor[HTML]{EFEFEF}{0.497} & \cellcolor[HTML]{EFEFEF}{0.502}\\
& & 0.328 & 0.401 & 0.445 & 0.473 & 0.492 & 0.507 & 0.517 & 0.525 & 0.533 & 0.538 & 0.567 & 0.579 & 0.586 & 0.590 & 0.592 & 0.594 & 0.599 & 0.601 \\
 & \multirow{-2}{*}{$\textsc{NetMF}$} & \cellcolor[HTML]{EFEFEF}0.264 & \cellcolor[HTML]{EFEFEF}0.346 & \cellcolor[HTML]{EFEFEF}0.392 & \cellcolor[HTML]{EFEFEF}0.421 & \cellcolor[HTML]{EFEFEF}0.440 & \cellcolor[HTML]{EFEFEF}0.454 & \cellcolor[HTML]{EFEFEF}0.466 & \cellcolor[HTML]{EFEFEF}0.474 & \cellcolor[HTML]{EFEFEF}0.481 & \cellcolor[HTML]{EFEFEF}0.487 & \cellcolor[HTML]{EFEFEF}0.516 & \cellcolor[HTML]{EFEFEF}0.528 & \cellcolor[HTML]{EFEFEF}0.535 & \cellcolor[HTML]{EFEFEF}0.538 & \cellcolor[HTML]{EFEFEF}0.540 & \cellcolor[HTML]{EFEFEF}0.542 & \cellcolor[HTML]{EFEFEF}0.547 & \cellcolor[HTML]{EFEFEF}0.548\\
 &  & 0.337 & 0.384 & 0.415 & 0.439 & 0.449 & 0.459 & 0.475 & 0.479 & 0.484 & 0.491 & 0.517 & 0.532 & 0.538 & 0.545 & 0.551 & 0.552 & 0.556 & 0.558 \\
 & \multirow{-2}{*}{$\textsc{GEMSEC}$} & \cellcolor[HTML]{EFEFEF}0.288 & \cellcolor[HTML]{EFEFEF}0.339 & \cellcolor[HTML]{EFEFEF}0.376 & \cellcolor[HTML]{EFEFEF}0.400 & \cellcolor[HTML]{EFEFEF}0.412 & \cellcolor[HTML]{EFEFEF}0.422 & \cellcolor[HTML]{EFEFEF}0.435 & \cellcolor[HTML]{EFEFEF}0.439 & \cellcolor[HTML]{EFEFEF}0.445 & \cellcolor[HTML]{EFEFEF}0.451 & \cellcolor[HTML]{EFEFEF}0.476 & \cellcolor[HTML]{EFEFEF}0.488 & \cellcolor[HTML]{EFEFEF}0.494 & \cellcolor[HTML]{EFEFEF}0.497 & \cellcolor[HTML]{EFEFEF}0.501 & \cellcolor[HTML]{EFEFEF}0.499 & \cellcolor[HTML]{EFEFEF}0.502 & \cellcolor[HTML]{EFEFEF}0.501\\
&  & 0.222 & 0.264 & 0.295 & 0.317 & 0.338 & 0.340 & 0.364 & 0.370 & 0.376 & 0.382 & 0.423 & 0.439 & 0.448 & 0.449 & 0.455 & 0.456 & 0.461 & 0.457 \\
\multirow{-18}{*}{\rotatebox{90}{Baselines}} & \multirow{-2}{*}{$\textsc{M-NMF}$} & \cellcolor[HTML]{EFEFEF}0.109 & \cellcolor[HTML]{EFEFEF}0.161 & \cellcolor[HTML]{EFEFEF}0.205 & \cellcolor[HTML]{EFEFEF}0.229 & \cellcolor[HTML]{EFEFEF}0.255 & \cellcolor[HTML]{EFEFEF}0.261 & \cellcolor[HTML]{EFEFEF}0.285 & \cellcolor[HTML]{EFEFEF}0.293 & \cellcolor[HTML]{EFEFEF}0.301 & \cellcolor[HTML]{EFEFEF}0.306 & \cellcolor[HTML]{EFEFEF}0.353 & \cellcolor[HTML]{EFEFEF}0.370 & \cellcolor[HTML]{EFEFEF}0.380 & \cellcolor[HTML]{EFEFEF}0.381 & \cellcolor[HTML]{EFEFEF}0.389 & \cellcolor[HTML]{EFEFEF}0.390 & \cellcolor[HTML]{EFEFEF}0.394 & \cellcolor[HTML]{EFEFEF}0.390
\\\midrule
& & 0.422 & 0.465 & 0.490 & 0.504 & 0.518 & 0.528 & 0.535 & 0.543 & 0.551 & 0.555 & {\ul 0.588} & \textbf{0.600} & {\ul 0.607} & {\ul 0.611} & {\textbf{0.616}} & {\ul 0.616} & {\ul 0.619} & \textbf{0.620} \\
& \multirow{-2}{*}{$\textsc{Gauss}$} & \cellcolor[HTML]{EFEFEF}0.367 & \cellcolor[HTML]{EFEFEF}0.415 & \cellcolor[HTML]{EFEFEF}0.440 & \cellcolor[HTML]{EFEFEF}0.456 & \cellcolor[HTML]{EFEFEF}0.471 & \cellcolor[HTML]{EFEFEF}0.479 & \cellcolor[HTML]{EFEFEF}0.488 & \cellcolor[HTML]{EFEFEF}0.496 & \cellcolor[HTML]{EFEFEF}0.504 & \cellcolor[HTML]{EFEFEF}0.508 & \cellcolor[HTML]{EFEFEF}\textbf{0.540} & \cellcolor[HTML]{EFEFEF}\textbf{0.551} & \cellcolor[HTML]{EFEFEF}\textbf{0.558} & \cellcolor[HTML]{EFEFEF}\textbf{0.562} & \cellcolor[HTML]{EFEFEF}\textbf{0.564} & \cellcolor[HTML]{EFEFEF}\textbf{0.566} & \cellcolor[HTML]{EFEFEF}\textbf{0.567} & \cellcolor[HTML]{EFEFEF}{\ul 0.566} \\
& & 0.441 & 0.497 & 0.517 & 0.531 & 0.539 & 0.544 & 0.551 & 0.555 & 0.558 & 0.561 & 0.580 & 0.591 & 0.597 & 0.602 & 0.608 & 0.610 & 0.614 & 0.609 \\
\multirow{-4}{*}{\rotatebox{90}{\scriptsize\textsc{KernelNE}}} & \multirow{-2}{*}{$\textsc{Sch}$} & \cellcolor[HTML]{EFEFEF}{\ul 0.365} & \cellcolor[HTML]{EFEFEF}0.428 & \cellcolor[HTML]{EFEFEF}0.451 & \cellcolor[HTML]{EFEFEF}0.465 & \cellcolor[HTML]{EFEFEF}0.476 & \cellcolor[HTML]{EFEFEF}0.480 & \cellcolor[HTML]{EFEFEF}0.488 & \cellcolor[HTML]{EFEFEF}0.492 & \cellcolor[HTML]{EFEFEF}0.496 & \cellcolor[HTML]{EFEFEF}0.498 & \cellcolor[HTML]{EFEFEF}0.518 & \cellcolor[HTML]{EFEFEF}0.531 & \cellcolor[HTML]{EFEFEF}0.537 & \cellcolor[HTML]{EFEFEF}0.542 & \cellcolor[HTML]{EFEFEF}0.547 & \cellcolor[HTML]{EFEFEF}0.549 & \cellcolor[HTML]{EFEFEF}0.551 & \cellcolor[HTML]{EFEFEF}0.546 \\ \midrule
& & {0.431} & {\ul 0.493} & {\ul 0.514} & {\ul 0.530} & {\ul 0.539} & {\ul 0.547} & {\ul 0.552} & {\ul 0.559} & {\ul 0.563} & {\ul 0.566} & \textbf{0.590} & \textbf{0.600} & \textbf{0.609} & \textbf{0.613} & {\ul 0.615} & \textbf{0.619} & \textbf{0.620} & \textbf{0.620} \\
& \multirow{-2}{*}{$\textsc{Gauss}$} & \cellcolor[HTML]{EFEFEF}0.362 & \cellcolor[HTML]{EFEFEF}{\ul 0.434} & \cellcolor[HTML]{EFEFEF}{\ul 0.455} & \cellcolor[HTML]{EFEFEF}{\ul 0.473} & \cellcolor[HTML]{EFEFEF}{\ul 0.482} & \cellcolor[HTML]{EFEFEF}\textbf{0.491} & \cellcolor[HTML]{EFEFEF}\textbf{0.497} & \cellcolor[HTML]{EFEFEF}\textbf{0.504} & \cellcolor[HTML]{EFEFEF}\textbf{0.508} & \cellcolor[HTML]{EFEFEF}\textbf{0.511} & \cellcolor[HTML]{EFEFEF}{\ul 0.537} & \cellcolor[HTML]{EFEFEF}{\ul 0.546} & \cellcolor[HTML]{EFEFEF}{\ul 0.555} & \cellcolor[HTML]{EFEFEF}{\ul 0.559} & \cellcolor[HTML]{EFEFEF}{\ul 0.562} & \cellcolor[HTML]{EFEFEF}{\ul \textbf{0.566}} & \cellcolor[HTML]{EFEFEF}\textbf{0.567} & \cellcolor[HTML]{EFEFEF}{0.565} \\
 & & \textbf{0.443} & \textbf{0.500} & \textbf{0.525} & \textbf{0.538} & \textbf{0.547} & \textbf{0.553} & \textbf{0.558} & \textbf{0.563} & \textbf{0.567} & \textbf{0.570} & {\ul 0.588} & 0.597 & 0.603 & 0.609 & 0.612 & 0.614 & 0.616 & 0.615 \\
\multirow{-4}{*}{\rotatebox{90}{\scriptsize \rotatebox{0}{\textsc{MKernelNE}}}} & \multirow{-2}{*}{$\textsc{Sch}$} & \cellcolor[HTML]{EFEFEF}{\ul 0.368} & \cellcolor[HTML]{EFEFEF}\textbf{0.433} & \cellcolor[HTML]{EFEFEF}\textbf{0.458} & \cellcolor[HTML]{EFEFEF}\textbf{0.475} & \cellcolor[HTML]{EFEFEF}\textbf{0.483} & \cellcolor[HTML]{EFEFEF}{\ul 0.490} & \cellcolor[HTML]{EFEFEF}{\ul 0.495} & \cellcolor[HTML]{EFEFEF}{\ul 0.500} & \cellcolor[HTML]{EFEFEF}{\ul 0.505} & \cellcolor[HTML]{EFEFEF}0.508 & \cellcolor[HTML]{EFEFEF}0.527 & \cellcolor[HTML]{EFEFEF}0.537 & \cellcolor[HTML]{EFEFEF}0.544 & \cellcolor[HTML]{EFEFEF}0.550 & \cellcolor[HTML]{EFEFEF}0.553 & \cellcolor[HTML]{EFEFEF}0.557 & {\cellcolor[HTML]{EFEFEF}0.557} & \cellcolor[HTML]{EFEFEF}0.556\\ \\ \bottomrule
\end{tabular}%
}
\end{table*}

\begin{table*}[]
\caption{Node classification task for varying training sizes on \textsl{Cora}. For each method, the rows show the Micro-$F_1$ and Macro-$F_1$ scores, respectively.}
\label{tab:classification_cora}
\centering
\resizebox{\textwidth}{!}{%
\begin{tabular}{crcccccccccccccccccc}\toprule
& & \textbf{1\%} & \textbf{2\%} & \textbf{3\%} & \textbf{4\%} & \textbf{5\%} & \textbf{6\%} & \textbf{7\%} & \textbf{8\%} & \textbf{9\%} & \textbf{10\%} & \textbf{20\%} & \textbf{30\%} & \textbf{40\%} & \textbf{50\%} & \textbf{60\%} & \textbf{70\%} & \textbf{80\%} & \textbf{90\%} \\\midrule
& & 0.522 & 0.617 & 0.660 & 0.688 & 0.703 & 0.715 & 0.724 & 0.732 & 0.742 & 0.747 & 0.782 & 0.799 & 0.808 & 0.815 & 0.821 & 0.825 & 0.827 & 0.832 \\
& \multirow{-2}{*}{$\textsc{DeepWalk}$} & \cellcolor[HTML]{EFEFEF}0.442 & \cellcolor[HTML]{EFEFEF}0.569 & \cellcolor[HTML]{EFEFEF}0.628 & \cellcolor[HTML]{EFEFEF}0.664 & \cellcolor[HTML]{EFEFEF}0.682 & \cellcolor[HTML]{EFEFEF}0.698 & \cellcolor[HTML]{EFEFEF}0.709 & \cellcolor[HTML]{EFEFEF}0.717 & \cellcolor[HTML]{EFEFEF}0.727 & \cellcolor[HTML]{EFEFEF}0.735 & \cellcolor[HTML]{EFEFEF}0.771 & \cellcolor[HTML]{EFEFEF}0.788 & \cellcolor[HTML]{EFEFEF}0.798 & \cellcolor[HTML]{EFEFEF}0.806 & \cellcolor[HTML]{EFEFEF}0.811 & \cellcolor[HTML]{EFEFEF}0.815 & \cellcolor[HTML]{EFEFEF}0.816 & \cellcolor[HTML]{EFEFEF}0.821 \\
& & 0.570 & 0.659 & 0.695 & 0.720 & 0.734 & 0.743 & 0.752 & 0.759 & 0.765 & 0.770 & 0.800 & 0.816 & 0.824 & 0.831 & 0.835 & 0.839 & 0.842 & 0.845 \\
& \multirow{-2}{*}{$\textsc{Node2Vec}$} & \cellcolor[HTML]{EFEFEF}0.489 & \cellcolor[HTML]{EFEFEF}0.612 & \cellcolor[HTML]{EFEFEF}0.662 & \cellcolor[HTML]{EFEFEF}0.694 & \cellcolor[HTML]{EFEFEF}0.714 & \cellcolor[HTML]{EFEFEF}0.724 & \cellcolor[HTML]{EFEFEF}0.735 & \cellcolor[HTML]{EFEFEF}0.743 & \cellcolor[HTML]{EFEFEF}0.751 & \cellcolor[HTML]{EFEFEF}0.755 & \cellcolor[HTML]{EFEFEF}0.788 & \cellcolor[HTML]{EFEFEF}0.804 & \cellcolor[HTML]{EFEFEF}0.813 & \cellcolor[HTML]{EFEFEF}0.820 & \cellcolor[HTML]{EFEFEF}0.824 & \cellcolor[HTML]{EFEFEF}0.828 & \cellcolor[HTML]{EFEFEF}0.830 & \cellcolor[HTML]{EFEFEF}0.832 \\
& & 0.351 & 0.416 & 0.463 & 0.498 & 0.521 & 0.546 & 0.566 & 0.581 & 0.598 & 0.609 & 0.673 & 0.701 & 0.719 & 0.728 & 0.735 & 0.741 & 0.743 & 0.747 \\
& \multirow{-2}{*}{$\textsc{LINE}$} & \cellcolor[HTML]{EFEFEF}0.223 & \cellcolor[HTML]{EFEFEF}0.306 & \cellcolor[HTML]{EFEFEF}0.373 & \cellcolor[HTML]{EFEFEF}0.425 & \cellcolor[HTML]{EFEFEF}0.457 & \cellcolor[HTML]{EFEFEF}0.492 & \cellcolor[HTML]{EFEFEF}0.519 & \cellcolor[HTML]{EFEFEF}0.543 & \cellcolor[HTML]{EFEFEF}0.565 & \cellcolor[HTML]{EFEFEF}0.578 & \cellcolor[HTML]{EFEFEF}0.659 & \cellcolor[HTML]{EFEFEF}0.691 & \cellcolor[HTML]{EFEFEF}0.710 & \cellcolor[HTML]{EFEFEF}0.720 & \cellcolor[HTML]{EFEFEF}0.727 & \cellcolor[HTML]{EFEFEF}0.733 & \cellcolor[HTML]{EFEFEF}0.736 & \cellcolor[HTML]{EFEFEF}0.738 \\
& & 0.278 & 0.284 & 0.297 & 0.301 & 0.302 & 0.301 & 0.302 & 0.302 & 0.302 & 0.302 & 0.303 & 0.302 & 0.302 & 0.302 & 0.303 & 0.304 & 0.303 & 0.306 \\
& \multirow{-2}{*}{$\textsc{HOPE}$} & \cellcolor[HTML]{EFEFEF}0.070 & \cellcolor[HTML]{EFEFEF}0.067 & \cellcolor[HTML]{EFEFEF}0.067 & \cellcolor[HTML]{EFEFEF}0.066 & \cellcolor[HTML]{EFEFEF}0.067 & \cellcolor[HTML]{EFEFEF}0.067 & \cellcolor[HTML]{EFEFEF}0.066 & \cellcolor[HTML]{EFEFEF}0.066 & \cellcolor[HTML]{EFEFEF}0.066 & \cellcolor[HTML]{EFEFEF}0.066 & \cellcolor[HTML]{EFEFEF}0.067 & \cellcolor[HTML]{EFEFEF}0.067 & \cellcolor[HTML]{EFEFEF}0.067 & \cellcolor[HTML]{EFEFEF}0.067 & \cellcolor[HTML]{EFEFEF}0.067 & \cellcolor[HTML]{EFEFEF}0.068 & \cellcolor[HTML]{EFEFEF}0.070 & \cellcolor[HTML]{EFEFEF}0.074 \\
 &  & {0.605} & {0.680} & {0.708} & {0.727} & {0.737} & {0.753} & {0.761} & {0.763} & {0.766} & {0.774} & {0.798} & {0.813} & {0.818} & {0.828} & {0.829} & {0.827} & {0.837} & {0.829} \\
 & \multirow{-2}{*}{$\textsc{VERSE}$} & \cellcolor[HTML]{EFEFEF}{0.527} & \cellcolor[HTML]{EFEFEF}{0.628} & \cellcolor[HTML]{EFEFEF}{0.675} & \cellcolor[HTML]{EFEFEF}{0.704} & \cellcolor[HTML]{EFEFEF}{0.717} & \cellcolor[HTML]{EFEFEF}{0.737} & \cellcolor[HTML]{EFEFEF}{0.746} & \cellcolor[HTML]{EFEFEF}{0.752} & \cellcolor[HTML]{EFEFEF}{0.754} & \cellcolor[HTML]{EFEFEF}{0.764} & \cellcolor[HTML]{EFEFEF}{0.788} & \cellcolor[HTML]{EFEFEF}{0.804} & \cellcolor[HTML]{EFEFEF}{0.809} & \cellcolor[HTML]{EFEFEF}{0.822} & \cellcolor[HTML]{EFEFEF}{0.821} & \cellcolor[HTML]{EFEFEF}{0.818} & \cellcolor[HTML]{EFEFEF}{0.833} & \cellcolor[HTML]{EFEFEF}{0.819} \\
 &  & {0.337} & {0.380} & {0.440} & {0.475} & {0.513} & {0.529} & {0.574} & {0.604} & {0.624} & {0.628} & {0.729} & {0.754} & {0.785} & {0.794} & {0.804} & {0.803} & {0.810} & {0.816} \\
 & \multirow{-2}{*}{$\textsc{ProNE}$} & \cellcolor[HTML]{EFEFEF}{0.202} & \cellcolor[HTML]{EFEFEF}{0.223} & \cellcolor[HTML]{EFEFEF}{0.319} & \cellcolor[HTML]{EFEFEF}{0.363} & \cellcolor[HTML]{EFEFEF}{0.434} & \cellcolor[HTML]{EFEFEF}{0.446} & \cellcolor[HTML]{EFEFEF}{0.520} & \cellcolor[HTML]{EFEFEF}{0.559} & \cellcolor[HTML]{EFEFEF}{0.579} & \cellcolor[HTML]{EFEFEF}{0.589} & \cellcolor[HTML]{EFEFEF}{0.712} & \cellcolor[HTML]{EFEFEF}{0.743} & \cellcolor[HTML]{EFEFEF}{0.776} & \cellcolor[HTML]{EFEFEF}{0.784} & \cellcolor[HTML]{EFEFEF}{0.793} & \cellcolor[HTML]{EFEFEF}{0.793} & \cellcolor[HTML]{EFEFEF}{0.801} & \cellcolor[HTML]{EFEFEF}{0.809}
\\
& & 0.534 & 0.636 & 0.693 & 0.716 & 0.735 & 0.748 & 0.757 & 0.767 & 0.770 & 0.773 & \textbf{0.807} & \textbf{0.821} & {\ul 0.828} & {\ul 0.834} & {\ul 0.839} & {\ul0.841} & {\ul 0.839} & {\ul 0.844} \\
 & \multirow{-2}{*}{$\textsc{NetMF}$} & \cellcolor[HTML]{EFEFEF}0.461 & \cellcolor[HTML]{EFEFEF}0.591 & \cellcolor[HTML]{EFEFEF}0.667 & \cellcolor[HTML]{EFEFEF}0.694 & \cellcolor[HTML]{EFEFEF}0.717 & \cellcolor[HTML]{EFEFEF}0.731 & \cellcolor[HTML]{EFEFEF}0.741 & \cellcolor[HTML]{EFEFEF}0.751 & \cellcolor[HTML]{EFEFEF}0.757 & \cellcolor[HTML]{EFEFEF}0.760 & \cellcolor[HTML]{EFEFEF}\textbf{0.797} & \cellcolor[HTML]{EFEFEF}\textbf{0.811} & \cellcolor[HTML]{EFEFEF}\textbf{0.819} & \cellcolor[HTML]{EFEFEF}{\ul 0.824} & \cellcolor[HTML]{EFEFEF}{\ul 0.830} & \cellcolor[HTML]{EFEFEF}{\ul 0.832} & \cellcolor[HTML]{EFEFEF}{\ul 0.831} & \cellcolor[HTML]{EFEFEF}{\ul 0.835}\\
 &  & 0.397 & 0.470 & 0.497 & 0.530 & 0.551 & 0.568 & 0.578 & 0.587 & 0.596 & 0.601 & 0.643 & 0.674 & 0.698 & 0.714 & 0.728 & 0.735 & 0.741 & 0.744 \\
 & \multirow{-2}{*}{$\textsc{GEMSEC}$} & \cellcolor[HTML]{EFEFEF}0.317 & \cellcolor[HTML]{EFEFEF}0.406 & \cellcolor[HTML]{EFEFEF}0.439 & \cellcolor[HTML]{EFEFEF}0.477 & \cellcolor[HTML]{EFEFEF}0.504 & \cellcolor[HTML]{EFEFEF}0.527 & \cellcolor[HTML]{EFEFEF}0.535 & \cellcolor[HTML]{EFEFEF}0.546 & \cellcolor[HTML]{EFEFEF}0.556 & \cellcolor[HTML]{EFEFEF}0.562 & \cellcolor[HTML]{EFEFEF}0.611 & \cellcolor[HTML]{EFEFEF}0.646 & \cellcolor[HTML]{EFEFEF}0.672 & \cellcolor[HTML]{EFEFEF}0.689 & \cellcolor[HTML]{EFEFEF}0.704 & \cellcolor[HTML]{EFEFEF}0.713 & \cellcolor[HTML]{EFEFEF}0.719 & \cellcolor[HTML]{EFEFEF}0.722\\
 &  & 0.419 & 0.507 & 0.549 & 0.580 & 0.604 & 0.622 & 0.633 & 0.642 & 0.652 & 0.656 & 0.700 & 0.717 & 0.725 & 0.732 & 0.736 & 0.736 & 0.744 & 0.742 \\
 \multirow{-18}{*}{\rotatebox{90}{Baselines}} & \multirow{-2}{*}{$\textsc{M-NMF}$} & \cellcolor[HTML]{EFEFEF}0.354 & \cellcolor[HTML]{EFEFEF}0.459 & \cellcolor[HTML]{EFEFEF}0.507 & \cellcolor[HTML]{EFEFEF}0.550 & \cellcolor[HTML]{EFEFEF}0.575 & \cellcolor[HTML]{EFEFEF}0.598 & \cellcolor[HTML]{EFEFEF}0.609 & \cellcolor[HTML]{EFEFEF}0.622 & \cellcolor[HTML]{EFEFEF}0.632 & \cellcolor[HTML]{EFEFEF}0.638 & \cellcolor[HTML]{EFEFEF}0.687 & \cellcolor[HTML]{EFEFEF}0.706 & \cellcolor[HTML]{EFEFEF}0.716 & \cellcolor[HTML]{EFEFEF}0.722 & \cellcolor[HTML]{EFEFEF}0.728 & \cellcolor[HTML]{EFEFEF}0.728 & \cellcolor[HTML]{EFEFEF}0.735 & \cellcolor[HTML]{EFEFEF}0.734
\\\midrule
& & {\ul 0.628} & 0.696 & 0.721 & 0.739 & 0.748 & {\ul 0.759} & {\ul 0.766} & {\ul 0.772} & {\ul 0.775} & {\ul 0.780} & {\ul 0.806} & {\ul 0.820} & \textbf{0.829} & \textbf{0.837} & \textbf{0.843} & \textbf{0.846} & \textbf{0.849} & \textbf{0.851} \\
& \multirow{-2}{*}{$\textsc{Gauss}$} & \cellcolor[HTML]{EFEFEF}\textbf{0.566} & \cellcolor[HTML]{EFEFEF}\textbf{0.664} & \cellcolor[HTML]{EFEFEF}\textbf{0.696} & \cellcolor[HTML]{EFEFEF}{\ul 0.721} & \cellcolor[HTML]{EFEFEF}{\ul 0.730} & \cellcolor[HTML]{EFEFEF}{\ul 0.743} & \cellcolor[HTML]{EFEFEF}{\ul 0.752} & \cellcolor[HTML]{EFEFEF}{\ul 0.758} & \cellcolor[HTML]{EFEFEF}{\ul 0.762} & \cellcolor[HTML]{EFEFEF}{\ul 0.767} & \cellcolor[HTML]{EFEFEF}{\ul 0.794} & \cellcolor[HTML]{EFEFEF}{\ul 0.809} & \cellcolor[HTML]{EFEFEF}{\ul 0.818} & \cellcolor[HTML]{EFEFEF}\textbf{0.826} & \cellcolor[HTML]{EFEFEF}\textbf{0.832} & \cellcolor[HTML]{EFEFEF}\textbf{0.836} & \cellcolor[HTML]{EFEFEF}\textbf{0.838} & \cellcolor[HTML]{EFEFEF}\textbf{0.840} \\
& & 0.619 & 0.695 & 0.722 & 0.736 & 0.745 & 0.750 & 0.755 & 0.759 & 0.763 & 0.765 & 0.783 & 0.790 & 0.796 & 0.800 & 0.803 & 0.806 & 0.807 & 0.812 \\
\multirow{-4}{*}{\rotatebox{90}{\scriptsize \textsc{KernelNE}}} & \multirow{-2}{*}{$\textsc{Sch}$} & \cellcolor[HTML]{EFEFEF}0.521 & \cellcolor[HTML]{EFEFEF}0.631 & \cellcolor[HTML]{EFEFEF}0.671 & \cellcolor[HTML]{EFEFEF}0.697 & \cellcolor[HTML]{EFEFEF}0.712 & \cellcolor[HTML]{EFEFEF}0.721 & \cellcolor[HTML]{EFEFEF}0.728 & \cellcolor[HTML]{EFEFEF}0.734 & \cellcolor[HTML]{EFEFEF}0.743 & \cellcolor[HTML]{EFEFEF}0.745 & \cellcolor[HTML]{EFEFEF}0.771 & \cellcolor[HTML]{EFEFEF}0.780 & \cellcolor[HTML]{EFEFEF}0.786 & \cellcolor[HTML]{EFEFEF}0.790 & \cellcolor[HTML]{EFEFEF}0.792 & \cellcolor[HTML]{EFEFEF}0.795 & \cellcolor[HTML]{EFEFEF}0.795 & \cellcolor[HTML]{EFEFEF}0.799 \\\midrule
& & \textbf{0.631} & \textbf{0.701} & \textbf{0.731} & \textbf{0.748} & \textbf{0.757} & \textbf{0.764} & \textbf{0.770} & \textbf{0.775} & \textbf{0.778} & \textbf{0.781} & 0.801 & 0.812 & 0.819 & 0.823 & 0.827 & 0.828 & 0.833 & 0.833 \\
& \multirow{-2}{*}{$\textsc{Gauss}$} & \cellcolor[HTML]{EFEFEF}{\ul 0.562} & \cellcolor[HTML]{EFEFEF}{\ul 0.656} & \cellcolor[HTML]{EFEFEF}\textbf{0.696} & \cellcolor[HTML]{EFEFEF}\textbf{0.723} & \cellcolor[HTML]{EFEFEF}\textbf{0.736} & \cellcolor[HTML]{EFEFEF}\textbf{0.746} & \cellcolor[HTML]{EFEFEF}\textbf{0.755} & \cellcolor[HTML]{EFEFEF}\textbf{0.761} & \cellcolor[HTML]{EFEFEF}\textbf{0.765} & \cellcolor[HTML]{EFEFEF}\textbf{0.769} & \cellcolor[HTML]{EFEFEF}0.791 & \cellcolor[HTML]{EFEFEF}0.801 & \cellcolor[HTML]{EFEFEF}0.808 & \cellcolor[HTML]{EFEFEF}0.813 & \cellcolor[HTML]{EFEFEF}0.817 & \cellcolor[HTML]{EFEFEF}0.818 & \cellcolor[HTML]{EFEFEF}0.822 & \cellcolor[HTML]{EFEFEF}0.820 \\
& & 0.623 & {\ul 0.699} & {\ul 0.728} & {\ul 0.742} & {\ul 0.751} & 0.757 & 0.762 & 0.767 & 0.770 & 0.772 & 0.788 & 0.797 & 0.803 & 0.806 & 0.812 & 0.812 & 0.817 & 0.818 \\
\multirow{-4}{*}{\rotatebox{90}{\scriptsize \textsc{MKernelNE}}} & \multirow{-2}{*}{$\textsc{Sch}$} & \cellcolor[HTML]{EFEFEF}0.527 & \cellcolor[HTML]{EFEFEF}0.637 & \cellcolor[HTML]{EFEFEF}0.686 & \cellcolor[HTML]{EFEFEF}0.708 & \cellcolor[HTML]{EFEFEF}0.723 & \cellcolor[HTML]{EFEFEF}0.733 & \cellcolor[HTML]{EFEFEF}0.741 & \cellcolor[HTML]{EFEFEF}0.749 & \cellcolor[HTML]{EFEFEF}0.753 & \cellcolor[HTML]{EFEFEF}0.756 & \cellcolor[HTML]{EFEFEF}0.777 & \cellcolor[HTML]{EFEFEF}0.787 & \cellcolor[HTML]{EFEFEF}0.793 & \cellcolor[HTML]{EFEFEF}0.796 & \cellcolor[HTML]{EFEFEF}0.801 & \cellcolor[HTML]{EFEFEF}0.802 & \cellcolor[HTML]{EFEFEF}0.805 & \cellcolor[HTML]{EFEFEF}0.804\\ \\\bottomrule
\end{tabular}%
}
\end{table*}

\begin{table*}[]
\caption{Node classification task for varying training sizes on \textsl{Dblp}. For each method, the rows show the Micro-$F_1$ and Macro-$F_1$ scores, respectively.}
\label{tab:classification_dblp}
\centering
\resizebox{\textwidth}{!}{%
\begin{tabular}{crcccccccccccccccccc}\toprule
& & \textbf{1\%} & \textbf{2\%} & \textbf{3\%} & \textbf{4\%} & \textbf{5\%} & \textbf{6\%} & \textbf{7\%} & \textbf{8\%} & \textbf{9\%} & \textbf{10\%} & \textbf{20\%} & \textbf{30\%} & \textbf{40\%} & \textbf{50\%} & \textbf{60\%} & \textbf{70\%} & \textbf{80\%} & \textbf{90\%} \\\midrule
& & 0.518 & 0.550 & 0.573 & 0.588 & 0.597 & 0.603 & 0.607 & 0.611 & 0.614 & 0.616 & 0.627 & 0.630 & 0.632 & 0.633 & 0.634 & 0.635 & 0.635 & 0.636 \\
& \multirow{-2}{*}{$\textsc{DeepWalk}$} & \cellcolor[HTML]{EFEFEF}0.464 & \cellcolor[HTML]{EFEFEF}0.496 & \cellcolor[HTML]{EFEFEF}0.515 & \cellcolor[HTML]{EFEFEF}0.527 & \cellcolor[HTML]{EFEFEF}0.535 & \cellcolor[HTML]{EFEFEF}0.540 & \cellcolor[HTML]{EFEFEF}0.543 & \cellcolor[HTML]{EFEFEF}0.547 & \cellcolor[HTML]{EFEFEF}0.550 & \cellcolor[HTML]{EFEFEF}0.551 & \cellcolor[HTML]{EFEFEF}0.560 & \cellcolor[HTML]{EFEFEF}0.563 & \cellcolor[HTML]{EFEFEF}0.564 & \cellcolor[HTML]{EFEFEF}0.565 & \cellcolor[HTML]{EFEFEF}0.566 & \cellcolor[HTML]{EFEFEF}0.566 & \cellcolor[HTML]{EFEFEF}0.567 & \cellcolor[HTML]{EFEFEF}0.567 \\
& & 0.557 & 0.575 & 0.590 & 0.599 & 0.606 & 0.612 & 0.615 & 0.618 & 0.621 & 0.623 & 0.633 & 0.636 & 0.638 & 0.639 & 0.640 & 0.641 & 0.641 & 0.641 \\
& \multirow{-2}{*}{$\textsc{Node2Vec}$} & \cellcolor[HTML]{EFEFEF}0.497 & \cellcolor[HTML]{EFEFEF}0.517 & \cellcolor[HTML]{EFEFEF}0.531 & \cellcolor[HTML]{EFEFEF}0.541 & \cellcolor[HTML]{EFEFEF}0.546 & \cellcolor[HTML]{EFEFEF}0.551 & \cellcolor[HTML]{EFEFEF}0.554 & \cellcolor[HTML]{EFEFEF}0.557 & \cellcolor[HTML]{EFEFEF}0.559 & \cellcolor[HTML]{EFEFEF}0.561 & \cellcolor[HTML]{EFEFEF}0.568 & \cellcolor[HTML]{EFEFEF}0.571 & \cellcolor[HTML]{EFEFEF}0.573 & \cellcolor[HTML]{EFEFEF}0.574 & \cellcolor[HTML]{EFEFEF}0.574 & \cellcolor[HTML]{EFEFEF}0.575 & \cellcolor[HTML]{EFEFEF}0.574 & \cellcolor[HTML]{EFEFEF}0.574 \\
& & 0.525 & 0.553 & 0.567 & 0.576 & 0.581 & 0.585 & 0.588 & 0.590 & 0.593 & 0.594 & 0.603 & 0.606 & 0.608 & 0.610 & 0.610 & 0.611 & 0.611 & 0.611 \\
& \multirow{-2}{*}{$\textsc{LINE}$} & \cellcolor[HTML]{EFEFEF}0.430 & \cellcolor[HTML]{EFEFEF}0.469 & \cellcolor[HTML]{EFEFEF}0.488 & \cellcolor[HTML]{EFEFEF}0.498 & \cellcolor[HTML]{EFEFEF}0.505 & \cellcolor[HTML]{EFEFEF}0.510 & \cellcolor[HTML]{EFEFEF}0.514 & \cellcolor[HTML]{EFEFEF}0.517 & \cellcolor[HTML]{EFEFEF}0.520 & \cellcolor[HTML]{EFEFEF}0.522 & \cellcolor[HTML]{EFEFEF}0.532 & \cellcolor[HTML]{EFEFEF}0.536 & \cellcolor[HTML]{EFEFEF}0.538 & \cellcolor[HTML]{EFEFEF}0.540 & \cellcolor[HTML]{EFEFEF}0.540 & \cellcolor[HTML]{EFEFEF}0.541 & \cellcolor[HTML]{EFEFEF}0.541 & \cellcolor[HTML]{EFEFEF}0.541 \\
& & 0.377 & 0.379 & 0.379 & 0.379 & 0.379 & 0.379 & 0.379 & 0.379 & 0.379 & 0.379 & 0.379 & 0.380 & 0.381 & 0.383 & 0.385 & 0.387 & 0.388 & 0.391 \\
& \multirow{-2}{*}{$\textsc{HOPE}$} & \cellcolor[HTML]{EFEFEF}0.137 & \cellcolor[HTML]{EFEFEF}0.137 & \cellcolor[HTML]{EFEFEF}0.137 & \cellcolor[HTML]{EFEFEF}0.137 & \cellcolor[HTML]{EFEFEF}0.137 & \cellcolor[HTML]{EFEFEF}0.137 & \cellcolor[HTML]{EFEFEF}0.137 & \cellcolor[HTML]{EFEFEF}0.137 & \cellcolor[HTML]{EFEFEF}0.137 & \cellcolor[HTML]{EFEFEF}0.138 & \cellcolor[HTML]{EFEFEF}0.139 & \cellcolor[HTML]{EFEFEF}0.140 & \cellcolor[HTML]{EFEFEF}0.143 & \cellcolor[HTML]{EFEFEF}0.146 & \cellcolor[HTML]{EFEFEF}0.149 & \cellcolor[HTML]{EFEFEF}0.152 & \cellcolor[HTML]{EFEFEF}0.156 & \cellcolor[HTML]{EFEFEF}0.160 \\
 &  & {0.550} & {0.566} & {0.585} & {0.593} & {0.599} & {0.605} & {0.610} & {0.613} & {0.615} & {0.617} & {0.629} & {0.629} & {0.632} & {0.635} & {0.633} & {0.632} & {0.634} & {0.636} \\
 & \multirow{-2}{*}{$\textsc{VERSE}$} & \cellcolor[HTML]{EFEFEF}{0.492} & \cellcolor[HTML]{EFEFEF}{0.508} & \cellcolor[HTML]{EFEFEF}{0.528} & \cellcolor[HTML]{EFEFEF}{0.534} & \cellcolor[HTML]{EFEFEF}{0.539} & \cellcolor[HTML]{EFEFEF}{0.545} & \cellcolor[HTML]{EFEFEF}{0.552} & \cellcolor[HTML]{EFEFEF}{0.551} & \cellcolor[HTML]{EFEFEF}{0.554} & \cellcolor[HTML]{EFEFEF}{0.555} & \cellcolor[HTML]{EFEFEF}{0.565} & \cellcolor[HTML]{EFEFEF}{0.566} & \cellcolor[HTML]{EFEFEF}{0.568} & \cellcolor[HTML]{EFEFEF}{0.571} & \cellcolor[HTML]{EFEFEF}{0.567} & \cellcolor[HTML]{EFEFEF}{0.566} & \cellcolor[HTML]{EFEFEF}{0.570} & \cellcolor[HTML]{EFEFEF}{0.571} \\
 &  & {0.475} & {0.525} & {0.550} & {0.564} & {0.574} & {0.581} & {0.589} & {0.593} & {0.596} & {0.600} & {0.615} & {0.624} & {0.626} & {0.628} & {0.629} & {0.632} & {0.632} & {0.634} \\
 & \multirow{-2}{*}{$\textsc{ProNE}$} & \cellcolor[HTML]{EFEFEF}{0.363} & \cellcolor[HTML]{EFEFEF}{0.436} & \cellcolor[HTML]{EFEFEF}{0.457} & \cellcolor[HTML]{EFEFEF}{0.478} & \cellcolor[HTML]{EFEFEF}{0.493} & \cellcolor[HTML]{EFEFEF}{0.498} & \cellcolor[HTML]{EFEFEF}{0.510} & \cellcolor[HTML]{EFEFEF}{0.517} & \cellcolor[HTML]{EFEFEF}{0.520} & \cellcolor[HTML]{EFEFEF}{0.520} & \cellcolor[HTML]{EFEFEF}{0.543} & \cellcolor[HTML]{EFEFEF}{0.553} & \cellcolor[HTML]{EFEFEF}{0.554} & \cellcolor[HTML]{EFEFEF}{0.557} & \cellcolor[HTML]{EFEFEF}{0.559} & \cellcolor[HTML]{EFEFEF}{0.562} & \cellcolor[HTML]{EFEFEF}{0.563} & \cellcolor[HTML]{EFEFEF}{0.564}
\\
& & 0.564 & 0.577 & 0.586 & 0.589 & 0.593 & 0.596 & 0.599 & 0.601 & 0.604 & 0.605 & 0.613 & 0.617 & 0.619 & 0.620 & 0.620 & 0.623 & 0.623 & 0.623 \\
& \multirow{-2}{*}{$\textsc{NetMF}$} & \cellcolor[HTML]{EFEFEF}0.463 & \cellcolor[HTML]{EFEFEF}0.490 & \cellcolor[HTML]{EFEFEF}0.503 & \cellcolor[HTML]{EFEFEF}0.506 & \cellcolor[HTML]{EFEFEF}0.510 & \cellcolor[HTML]{EFEFEF}0.513 & \cellcolor[HTML]{EFEFEF}0.517 & \cellcolor[HTML]{EFEFEF}0.518 & \cellcolor[HTML]{EFEFEF}0.521 & \cellcolor[HTML]{EFEFEF}0.522 & \cellcolor[HTML]{EFEFEF}0.528 & \cellcolor[HTML]{EFEFEF}0.530 & \cellcolor[HTML]{EFEFEF}0.532 & \cellcolor[HTML]{EFEFEF}0.531 & \cellcolor[HTML]{EFEFEF}0.531 & \cellcolor[HTML]{EFEFEF}0.533 & \cellcolor[HTML]{EFEFEF}0.533 & \cellcolor[HTML]{EFEFEF}0.533\\
 &  & 0.511 & 0.538 & 0.555 & 0.566 & 0.575 & 0.583 & 0.588 & 0.591 & 0.593 & 0.597 & 0.607 & 0.611 & 0.613 & 0.614 & 0.615 & 0.615 & 0.616 & 0.615 \\
  & \multirow{-2}{*}{$\textsc{GEMSEC}$} & \cellcolor[HTML]{EFEFEF}0.453 & \cellcolor[HTML]{EFEFEF}0.477 & \cellcolor[HTML]{EFEFEF}0.492 & \cellcolor[HTML]{EFEFEF}0.501 & \cellcolor[HTML]{EFEFEF}0.508 & \cellcolor[HTML]{EFEFEF}0.513 & \cellcolor[HTML]{EFEFEF}0.518 & \cellcolor[HTML]{EFEFEF}0.519 & \cellcolor[HTML]{EFEFEF}0.521 & \cellcolor[HTML]{EFEFEF}0.523 & \cellcolor[HTML]{EFEFEF}0.531 & \cellcolor[HTML]{EFEFEF}0.534 & \cellcolor[HTML]{EFEFEF}0.534 & \cellcolor[HTML]{EFEFEF}0.535 & \cellcolor[HTML]{EFEFEF}0.535 & \cellcolor[HTML]{EFEFEF}0.537 & \cellcolor[HTML]{EFEFEF}0.537 & \cellcolor[HTML]{EFEFEF}0.536\\
 &  & 0.473 & 0.501 & 0.518 & 0.527 & 0.533 & 0.540 & 0.542 & 0.545 & 0.548 & 0.551 & 0.563 & 0.568 & 0.571 & 0.574 & 0.574 & 0.577 & 0.577 & 0.579 \\
\multirow{-18}{*}{\rotatebox{90}{Baselines}} & \multirow{-2}{*}{$\textsc{M-NMF}$} & \cellcolor[HTML]{EFEFEF}0.302 & \cellcolor[HTML]{EFEFEF}0.345 & \cellcolor[HTML]{EFEFEF}0.369 & \cellcolor[HTML]{EFEFEF}0.383 & \cellcolor[HTML]{EFEFEF}0.392 & \cellcolor[HTML]{EFEFEF}0.400 & \cellcolor[HTML]{EFEFEF}0.406 & \cellcolor[HTML]{EFEFEF}0.410 & \cellcolor[HTML]{EFEFEF}0.414 & \cellcolor[HTML]{EFEFEF}0.418 & \cellcolor[HTML]{EFEFEF}0.436 & \cellcolor[HTML]{EFEFEF}0.443 & \cellcolor[HTML]{EFEFEF}0.447 & \cellcolor[HTML]{EFEFEF}0.450 & \cellcolor[HTML]{EFEFEF}0.451 & \cellcolor[HTML]{EFEFEF}0.454 & \cellcolor[HTML]{EFEFEF}0.455 & \cellcolor[HTML]{EFEFEF}0.455
\\\midrule
& & 0.558 & 0.575 & 0.587 & 0.595 & 0.601 & 0.606 & 0.610 & 0.613 & 0.615 & 0.617 & 0.627 & 0.630 & 0.632 & 0.633 & 0.634 & 0.635 & 0.636 & 0.636 \\
& \multirow{-2}{*}{$\textsc{Gauss}$} & \cellcolor[HTML]{EFEFEF}0.498 & \cellcolor[HTML]{EFEFEF}0.517 & \cellcolor[HTML]{EFEFEF}0.528 & \cellcolor[HTML]{EFEFEF}0.536 & \cellcolor[HTML]{EFEFEF}0.542 & \cellcolor[HTML]{EFEFEF}0.545 & \cellcolor[HTML]{EFEFEF}0.548 & \cellcolor[HTML]{EFEFEF}0.551 & \cellcolor[HTML]{EFEFEF}0.552 & \cellcolor[HTML]{EFEFEF}0.554 & \cellcolor[HTML]{EFEFEF}0.562 & \cellcolor[HTML]{EFEFEF}0.564 & \cellcolor[HTML]{EFEFEF}0.566 & \cellcolor[HTML]{EFEFEF}0.566 & \cellcolor[HTML]{EFEFEF}0.567 & \cellcolor[HTML]{EFEFEF}0.567 & \cellcolor[HTML]{EFEFEF}0.568 & \cellcolor[HTML]{EFEFEF}0.569 \\
& & 0.599 & 0.609 & 0.613 & 0.617 & 0.620 & 0.621 & 0.623 & 0.624 & 0.625 & 0.627 & 0.632 & 0.635 & 0.636 & 0.637 & 0.638 & 0.638 & 0.638 & 0.640 \\
\multirow{-4}{*}{\rotatebox{90}{\scriptsize \textsc{KernelNE}}} & \multirow{-2}{*}{$\textsc{Sch}$} & \cellcolor[HTML]{EFEFEF}0.512 & \cellcolor[HTML]{EFEFEF}0.531 & \cellcolor[HTML]{EFEFEF}0.539 & \cellcolor[HTML]{EFEFEF}0.546 & \cellcolor[HTML]{EFEFEF}0.549 & \cellcolor[HTML]{EFEFEF}0.552 & \cellcolor[HTML]{EFEFEF}0.554 & \cellcolor[HTML]{EFEFEF}0.556 & \cellcolor[HTML]{EFEFEF}0.557 & \cellcolor[HTML]{EFEFEF}0.559 & \cellcolor[HTML]{EFEFEF}0.565 & \cellcolor[HTML]{EFEFEF}0.568 & \cellcolor[HTML]{EFEFEF}0.569 & \cellcolor[HTML]{EFEFEF}0.570 & \cellcolor[HTML]{EFEFEF}0.571 & \cellcolor[HTML]{EFEFEF}0.571 & \cellcolor[HTML]{EFEFEF}0.571 & \cellcolor[HTML]{EFEFEF}0.572 \\\midrule
& & {\ul 0.603} & {\ul 0.614} & \textbf{0.620} & \textbf{0.624} & \textbf{0.627} & \textbf{0.630} & \textbf{0.631} & \textbf{0.633} & \textbf{0.635} & \textbf{0.636} & \textbf{0.643} & \textbf{0.646} & \textbf{0.647} & \textbf{0.648} & \textbf{0.649} & \textbf{0.650} & \textbf{0.649} & \textbf{0.650} \\
& \multirow{-2}{*}{$\textsc{Gauss}$} & \cellcolor[HTML]{EFEFEF}\textbf{0.528} & \cellcolor[HTML]{EFEFEF}\textbf{0.547} & \cellcolor[HTML]{EFEFEF}\textbf{0.554} & \cellcolor[HTML]{EFEFEF}\textbf{0.560} & \cellcolor[HTML]{EFEFEF}\textbf{0.564} & \cellcolor[HTML]{EFEFEF}\textbf{0.566} & \cellcolor[HTML]{EFEFEF}\textbf{0.568} & \cellcolor[HTML]{EFEFEF}\textbf{0.570} & \cellcolor[HTML]{EFEFEF}\textbf{0.572} & \cellcolor[HTML]{EFEFEF}\textbf{0.573} & \cellcolor[HTML]{EFEFEF}\textbf{0.579} & \cellcolor[HTML]{EFEFEF}\textbf{0.582} & \cellcolor[HTML]{EFEFEF}\textbf{0.582} & \cellcolor[HTML]{EFEFEF}\textbf{0.583} & \cellcolor[HTML]{EFEFEF}\textbf{0.584} & \cellcolor[HTML]{EFEFEF}\textbf{0.584} & \cellcolor[HTML]{EFEFEF}\textbf{0.584} & \cellcolor[HTML]{EFEFEF}\textbf{0.584} \\
& & \textbf{0.607} & \textbf{0.615} & {\ul 0.619} & {\ul 0.621} & {\ul 0.624} & {\ul 0.625} & {\ul 0.627} & {\ul 0.628} & {\ul 0.630} & {\ul 0.631} & {\ul 0.637} & {\ul 0.639} & {\ul 0.641} & {\ul 0.641} & {\ul 0.642} & {\ul 0.643} & {\ul 0.643} & {\ul 0.643} \\
\multirow{-4}{*}{\rotatebox{90}{\scriptsize \textsc{MKernelNE}}} & \multirow{-2}{*}{$\textsc{Sch}$} & \cellcolor[HTML]{EFEFEF}{\ul 0.517} & \cellcolor[HTML]{EFEFEF}{\ul 0.538} & \cellcolor[HTML]{EFEFEF}{\ul 0.546} & \cellcolor[HTML]{EFEFEF}{\ul 0.551} & \cellcolor[HTML]{EFEFEF}{\ul 0.555} & \cellcolor[HTML]{EFEFEF}{\ul 0.557} & \cellcolor[HTML]{EFEFEF}{\ul 0.560} & \cellcolor[HTML]{EFEFEF}{\ul 0.561} & \cellcolor[HTML]{EFEFEF}{\ul 0.563} & \cellcolor[HTML]{EFEFEF}{\ul 0.564} & \cellcolor[HTML]{EFEFEF}{\ul 0.570} & \cellcolor[HTML]{EFEFEF}{\ul 0.572} & \cellcolor[HTML]{EFEFEF}{\ul 0.574} & \cellcolor[HTML]{EFEFEF}{\ul 0.574} & \cellcolor[HTML]{EFEFEF}{\ul 0.575} & \cellcolor[HTML]{EFEFEF}{\ul 0.576} & \cellcolor[HTML]{EFEFEF}{\ul 0.576} & \cellcolor[HTML]{EFEFEF}{\ul 0.576} \\\\\bottomrule
\end{tabular}%
}
\end{table*}

\begin{table*}[]
\centering
\caption{Node classification task for varying training sizes on \textsl{PPI}. For each method, the rows show the Micro-$F_1$ and Macro-$F_1$ scores, respectively.}
\label{tab:classification_ppi}
\resizebox{\textwidth}{!}{%
\begin{tabular}{crcccccccccccccccccc}\toprule
 &  & \textbf{1\%} & \textbf{2\%} & \textbf{3\%} & \textbf{4\%} & \textbf{5\%} & \textbf{6\%} & \textbf{7\%} & \textbf{8\%} & \textbf{9\%} & \textbf{10\%} & \textbf{20\%} & \textbf{30\%} & \textbf{40\%} & \textbf{50\%} & \textbf{60\%} & \textbf{70\%} & \textbf{80\%} & \textbf{90\%} \\\midrule
 &  & 0.093 & 0.110 & 0.122 & 0.131 & 0.137 & 0.143 & 0.147 & 0.151 & 0.155 & 0.156 & 0.176 & 0.188 & 0.197 & 0.206 & 0.212 & 0.218 & 0.223 & 0.226 \\
 & \multirow{-2}{*}{$\textsc{DeepWalk}$} & \cellcolor[HTML]{EFEFEF}0.056 & \cellcolor[HTML]{EFEFEF}{\ul 0.076} & \cellcolor[HTML]{EFEFEF}{\ul 0.090} & \cellcolor[HTML]{EFEFEF}{\ul 0.099} & \cellcolor[HTML]{EFEFEF}{\ul 0.107} & \cellcolor[HTML]{EFEFEF}{\ul 0.113} & \cellcolor[HTML]{EFEFEF}{0.118} & \cellcolor[HTML]{EFEFEF}{0.123} & \cellcolor[HTML]{EFEFEF}{0.126} & \cellcolor[HTML]{EFEFEF}0.129 & \cellcolor[HTML]{EFEFEF}0.151 & \cellcolor[HTML]{EFEFEF}0.164 & \cellcolor[HTML]{EFEFEF}0.173 & \cellcolor[HTML]{EFEFEF}0.181 & \cellcolor[HTML]{EFEFEF}0.186 & \cellcolor[HTML]{EFEFEF}0.190 & \cellcolor[HTML]{EFEFEF}0.193 & \cellcolor[HTML]{EFEFEF}0.192 \\
 &  & 0.097 & {\ul 0.115} & 0.128 & {\ul 0.136} & 0.143 & 0.148 & 0.154 & 0.157 & 0.161 & 0.164 & 0.185 & 0.196 & 0.205 & 0.211 & 0.216 & 0.221 & 0.224 & 0.226 \\
 & \multirow{-2}{*}{$\textsc{Node2Vec}$} & \cellcolor[HTML]{EFEFEF}\textbf{0.058} & \cellcolor[HTML]{EFEFEF}\textbf{0.078} & \cellcolor[HTML]{EFEFEF}\textbf{0.092} & \cellcolor[HTML]{EFEFEF}\textbf{0.101} & \cellcolor[HTML]{EFEFEF}\textbf{0.108} & \cellcolor[HTML]{EFEFEF}\textbf{0.116} & \cellcolor[HTML]{EFEFEF}\textbf{0.122} & \cellcolor[HTML]{EFEFEF}\textbf{0.125} & \cellcolor[HTML]{EFEFEF}{\ul 0.129} & \cellcolor[HTML]{EFEFEF}\textbf{0.132} & \cellcolor[HTML]{EFEFEF}0.155 & \cellcolor[HTML]{EFEFEF}0.167 & \cellcolor[HTML]{EFEFEF}0.175 & \cellcolor[HTML]{EFEFEF}0.180 & \cellcolor[HTML]{EFEFEF}0.185 & \cellcolor[HTML]{EFEFEF}0.188 & \cellcolor[HTML]{EFEFEF}0.189 & \cellcolor[HTML]{EFEFEF}0.190 \\
 &  & 0.091 & 0.109 & 0.123 & 0.133 & 0.142 & 0.149 & 0.156 & 0.162 & 0.166 & 0.170 & 0.197 & 0.210 & 0.219 & 0.226 & 0.231 & 0.235 & 0.238 & 0.242 \\
 & \multirow{-2}{*}{$\textsc{LINE}$} & \cellcolor[HTML]{EFEFEF}0.046 & \cellcolor[HTML]{EFEFEF}0.063 & \cellcolor[HTML]{EFEFEF}0.075 & \cellcolor[HTML]{EFEFEF}0.084 & \cellcolor[HTML]{EFEFEF}0.093 & \cellcolor[HTML]{EFEFEF}0.099 & \cellcolor[HTML]{EFEFEF}0.106 & \cellcolor[HTML]{EFEFEF}0.111 & \cellcolor[HTML]{EFEFEF}0.116 & \cellcolor[HTML]{EFEFEF}0.120 & \cellcolor[HTML]{EFEFEF}0.148 & \cellcolor[HTML]{EFEFEF}0.164 & \cellcolor[HTML]{EFEFEF}0.174 & \cellcolor[HTML]{EFEFEF}0.181 & \cellcolor[HTML]{EFEFEF}0.187 & \cellcolor[HTML]{EFEFEF}0.191 & \cellcolor[HTML]{EFEFEF}0.193 & \cellcolor[HTML]{EFEFEF}0.192 \\
 &  & 0.067 & 0.068 & 0.069 & 0.069 & 0.069 & 0.069 & 0.069 & 0.070 & 0.069 & 0.069 & 0.069 & 0.071 & 0.073 & 0.077 & 0.081 & 0.086 & 0.089 & 0.093 \\
 & \multirow{-2}{*}{$\textsc{HOPE}$} & \cellcolor[HTML]{EFEFEF}0.019 & \cellcolor[HTML]{EFEFEF}0.019 & \cellcolor[HTML]{EFEFEF}0.019 & \cellcolor[HTML]{EFEFEF}0.019 & \cellcolor[HTML]{EFEFEF}0.019 & \cellcolor[HTML]{EFEFEF}0.019 & \cellcolor[HTML]{EFEFEF}0.019 & \cellcolor[HTML]{EFEFEF}0.019 & \cellcolor[HTML]{EFEFEF}0.019 & \cellcolor[HTML]{EFEFEF}0.018 & \cellcolor[HTML]{EFEFEF}0.019 & \cellcolor[HTML]{EFEFEF}0.020 & \cellcolor[HTML]{EFEFEF}0.022 & \cellcolor[HTML]{EFEFEF}0.025 & \cellcolor[HTML]{EFEFEF}0.027 & \cellcolor[HTML]{EFEFEF}0.030 & \cellcolor[HTML]{EFEFEF}0.032 & \cellcolor[HTML]{EFEFEF}0.033 \\
  &  & {0.089} & {0.110} & {0.126} & {0.132} & {0.139} & {0.150} & {0.153} & {0.156} & {0.162} & {0.158} & {0.185} & {0.195} & {0.212} & {0.216} & {0.224} & {0.228} & {0.234} & {0.246} \\
 & \multirow{-2}{*}{$\textsc{VERSE}$} & \cellcolor[HTML]{EFEFEF}{0.052} & \cellcolor[HTML]{EFEFEF}{0.072} & \cellcolor[HTML]{EFEFEF}{\ul 0.090} & \cellcolor[HTML]{EFEFEF}{\ul 0.099} & \cellcolor[HTML]{EFEFEF}{\ul 0.107} & \cellcolor[HTML]{EFEFEF}{\textbf{0.116}} & \cellcolor[HTML]{EFEFEF}{\ul 0.120} & \cellcolor[HTML]{EFEFEF}{\ul 0.124} & \cellcolor[HTML]{EFEFEF}{\textbf{0.131}} & \cellcolor[HTML]{EFEFEF}{0.129} & \cellcolor[HTML]{EFEFEF}{0.157} & \cellcolor[HTML]{EFEFEF}{0.165} & \cellcolor[HTML]{EFEFEF}{0.183} & \cellcolor[HTML]{EFEFEF}{0.185} & \cellcolor[HTML]{EFEFEF}{0.193} & \cellcolor[HTML]{EFEFEF}{0.191} & \cellcolor[HTML]{EFEFEF}{0.200} & \cellcolor[HTML]{EFEFEF}{0.200} \\
 &  & {0.085} & {0.096} & {0.116} & {0.129} & {0.145} & {0.146} & {0.155} & {0.163} & {0.170} & {0.174} & {0.207} & {\ul 0.226} & {0.237} & {\ul 0.242} & {\textbf{0.248}} & {\textbf{0.256}} & {\textbf{0.258}} & {\textbf{0.257}} \\
 & \multirow{-2}{*}{$\textsc{ProNE}$} & \cellcolor[HTML]{EFEFEF}{0.039} & \cellcolor[HTML]{EFEFEF}{0.052} & \cellcolor[HTML]{EFEFEF}{0.065} & \cellcolor[HTML]{EFEFEF}{0.079} & \cellcolor[HTML]{EFEFEF}{0.091} & \cellcolor[HTML]{EFEFEF}{0.096} & \cellcolor[HTML]{EFEFEF}{0.102} & \cellcolor[HTML]{EFEFEF}{0.111} & \cellcolor[HTML]{EFEFEF}{0.117} & \cellcolor[HTML]{EFEFEF}{0.123} & \cellcolor[HTML]{EFEFEF}{0.155} & \cellcolor[HTML]{EFEFEF}{\textbf{0.180}} & \cellcolor[HTML]{EFEFEF}{\textbf{0.190}} & \cellcolor[HTML]{EFEFEF}{\textbf{0.196}} & \cellcolor[HTML]{EFEFEF}{\textbf{0.205}} & \cellcolor[HTML]{EFEFEF}{\textbf{0.216}} & \cellcolor[HTML]{EFEFEF}{\textbf{0.214}} & \cellcolor[HTML]{EFEFEF}{\textbf{0.206}}
\\
 &  & 0.073 & 0.084 & 0.091 & 0.097 & 0.104 & 0.112 & 0.116 & 0.120 & 0.124 & 0.129 & 0.155 & 0.170 & 0.180 & 0.187 & 0.192 & 0.195 & 0.198 & 0.199 \\
 & \multirow{-2}{*}{$\textsc{NetMF}$} & \cellcolor[HTML]{EFEFEF}0.031 & \cellcolor[HTML]{EFEFEF}0.043 & \cellcolor[HTML]{EFEFEF}0.052 & \cellcolor[HTML]{EFEFEF}0.059 & \cellcolor[HTML]{EFEFEF}0.066 & \cellcolor[HTML]{EFEFEF}0.073 & \cellcolor[HTML]{EFEFEF}0.078 & \cellcolor[HTML]{EFEFEF}0.083 & \cellcolor[HTML]{EFEFEF}0.086 & \cellcolor[HTML]{EFEFEF}0.091 & \cellcolor[HTML]{EFEFEF}0.116 & \cellcolor[HTML]{EFEFEF}0.129 & \cellcolor[HTML]{EFEFEF}0.137 & \cellcolor[HTML]{EFEFEF}0.143 & \cellcolor[HTML]{EFEFEF}0.147 & \cellcolor[HTML]{EFEFEF}0.150 & \cellcolor[HTML]{EFEFEF}0.152 & \cellcolor[HTML]{EFEFEF}0.150\\
 &  & 0.074 & 0.078 & 0.081 & 0.082 & 0.082 & 0.085 & 0.087 & 0.087 & 0.087 & 0.089 & 0.101 & 0.112 & 0.122 & 0.131 & 0.138 & 0.143 & 0.150 & 0.151 \\
 & \multirow{-2}{*}{$\textsc{GEMSEC}$} & \cellcolor[HTML]{EFEFEF}0.050 & \cellcolor[HTML]{EFEFEF}0.059 & \cellcolor[HTML]{EFEFEF}0.062 & \cellcolor[HTML]{EFEFEF}0.063 & \cellcolor[HTML]{EFEFEF}0.064 & \cellcolor[HTML]{EFEFEF}0.067 & \cellcolor[HTML]{EFEFEF}0.069 & \cellcolor[HTML]{EFEFEF}0.070 & \cellcolor[HTML]{EFEFEF}0.071 & \cellcolor[HTML]{EFEFEF}0.073 & \cellcolor[HTML]{EFEFEF}0.087 & \cellcolor[HTML]{EFEFEF}0.098 & \cellcolor[HTML]{EFEFEF}0.106 & \cellcolor[HTML]{EFEFEF}0.113 & \cellcolor[HTML]{EFEFEF}0.120 & \cellcolor[HTML]{EFEFEF}0.122 & \cellcolor[HTML]{EFEFEF}0.127 & \cellcolor[HTML]{EFEFEF}0.125\\
 &  & 0.085 & 0.097 & 0.103 & 0.112 & 0.115 & 0.119 & 0.122 & 0.123 & 0.126 & 0.128 & 0.137 & 0.143 & 0.148 & 0.153 & 0.155 & 0.162 & 0.165 & 0.168 \\
 \multirow{-18}{*}{\rotatebox{90}{Baselines}} & \multirow{-2}{*}{$\textsc{M-NMF}$} & \cellcolor[HTML]{EFEFEF}{\ul 0.057} & \cellcolor[HTML]{EFEFEF}0.071 & \cellcolor[HTML]{EFEFEF}0.080 & \cellcolor[HTML]{EFEFEF}0.089 & \cellcolor[HTML]{EFEFEF}0.093 & \cellcolor[HTML]{EFEFEF}0.097 & \cellcolor[HTML]{EFEFEF}0.102 & \cellcolor[HTML]{EFEFEF}0.103 & \cellcolor[HTML]{EFEFEF}0.104 & \cellcolor[HTML]{EFEFEF}0.108 & \cellcolor[HTML]{EFEFEF}0.119 & \cellcolor[HTML]{EFEFEF}0.125 & \cellcolor[HTML]{EFEFEF}0.131 & \cellcolor[HTML]{EFEFEF}0.135 & \cellcolor[HTML]{EFEFEF}0.136 & \cellcolor[HTML]{EFEFEF}0.141 & \cellcolor[HTML]{EFEFEF}0.142 & \cellcolor[HTML]{EFEFEF}0.141
\\\midrule
 &  & 0.087 & 0.102 & 0.112 & 0.121 & 0.128 & 0.134 & 0.138 & 0.142 & 0.145 & 0.148 & 0.167 & 0.180 & 0.186 & 0.192 & 0.196 & 0.200 & 0.202 & 0.203 \\
 & \multirow{-2}{*}{$\textsc{Gauss}$} & \cellcolor[HTML]{EFEFEF}0.038 & \cellcolor[HTML]{EFEFEF}0.051 & \cellcolor[HTML]{EFEFEF}0.060 & \cellcolor[HTML]{EFEFEF}0.067 & \cellcolor[HTML]{EFEFEF}0.073 & \cellcolor[HTML]{EFEFEF}0.078 & \cellcolor[HTML]{EFEFEF}0.083 & \cellcolor[HTML]{EFEFEF}0.086 & \cellcolor[HTML]{EFEFEF}0.089 & \cellcolor[HTML]{EFEFEF}0.092 & \cellcolor[HTML]{EFEFEF}0.112 & \cellcolor[HTML]{EFEFEF}0.125 & \cellcolor[HTML]{EFEFEF}0.132 & \cellcolor[HTML]{EFEFEF}0.138 & \cellcolor[HTML]{EFEFEF}0.142 & \cellcolor[HTML]{EFEFEF}0.146 & \cellcolor[HTML]{EFEFEF}0.148 & \cellcolor[HTML]{EFEFEF}0.147 \\
 &  & 0.103 & 0.126 & 0.142 & 0.154 & 0.164 & 0.172 & 0.179 & {\ul 0.185} & {\ul 0.190} & \textbf{0.195} & \textbf{0.220} & \textbf{0.232} & \textbf{0.239} & \textbf{0.244} & \textbf{0.248} & {\ul 0.250} & {\ul 0.254} & {\ul 0.256} \\
\multirow{-4}{*}{\rotatebox{90}{\scriptsize \textsc{KernelNE}}} & \multirow{-2}{*}{$\textsc{Sch}$} & \cellcolor[HTML]{EFEFEF}0.050 & \cellcolor[HTML]{EFEFEF}0.069 & \cellcolor[HTML]{EFEFEF}0.081 & \cellcolor[HTML]{EFEFEF}0.091 & \cellcolor[HTML]{EFEFEF}0.100 & \cellcolor[HTML]{EFEFEF}0.107 & \cellcolor[HTML]{EFEFEF}0.114 & \cellcolor[HTML]{EFEFEF}0.119 & \cellcolor[HTML]{EFEFEF}0.124 & \cellcolor[HTML]{EFEFEF}0.128 & \cellcolor[HTML]{EFEFEF}{\ul 0.156} & \cellcolor[HTML]{EFEFEF}{0.171} & \cellcolor[HTML]{EFEFEF}\textbf{0.181} & \cellcolor[HTML]{EFEFEF}{\ul 0.187} & \cellcolor[HTML]{EFEFEF}{\ul 0.193} & \cellcolor[HTML]{EFEFEF}{\ul 0.196} & \cellcolor[HTML]{EFEFEF}{\ul 0.200} & \cellcolor[HTML]{EFEFEF}{\ul 0.198} \\\midrule
 &  & {\ul 0.104} & \textbf{0.128} & {\ul 0.144} & \textbf{0.156} & {\ul 0.165} & \textbf{0.174} & \textbf{0.181} & \textbf{0.186} & \textbf{0.191} & \textbf{0.195} & \textbf{0.220} & {\ul 0.231} & {\ul 0.238} & {\ul 0.242} & {\ul 0.246} & {0.249} & {\ul 0.251} & {\ul 0.254} \\
 & \multirow{-2}{*}{$\textsc{Gauss}$} & \cellcolor[HTML]{EFEFEF}{\ul 0.053} & \cellcolor[HTML]{EFEFEF}0.071 & \cellcolor[HTML]{EFEFEF}0.083 & \cellcolor[HTML]{EFEFEF}0.094 & \cellcolor[HTML]{EFEFEF}0.104 & \cellcolor[HTML]{EFEFEF}0.109 & \cellcolor[HTML]{EFEFEF}0.117 & \cellcolor[HTML]{EFEFEF}0.122 & \cellcolor[HTML]{EFEFEF}{0.126} & \cellcolor[HTML]{EFEFEF}{\ul 0.131} & \cellcolor[HTML]{EFEFEF}\textbf{0.158} & \cellcolor[HTML]{EFEFEF}{\ul 0.172} & \cellcolor[HTML]{EFEFEF}{\ul 0.181} & \cellcolor[HTML]{EFEFEF}{\ul 0.187} & \cellcolor[HTML]{EFEFEF}{0.192} & \cellcolor[HTML]{EFEFEF}{0.195} & \cellcolor[HTML]{EFEFEF}{0.198} & \cellcolor[HTML]{EFEFEF}{0.197} \\
 &  & \textbf{0.105} & \textbf{0.128} & \textbf{0.145} & \textbf{0.156} & \textbf{0.167} & {\ul 0.173} & {\ul 0.180} & \textbf{0.186} & \textbf{0.191} & {\ul 0.194} & {\ul 0.219} & 0.230 & 0.236 & 0.241 & 0.245 & 0.247 & 0.249 & 0.251 \\
\multirow{-4}{*}{\rotatebox{90}{\scriptsize \textsc{MKernelNE}}} & \multirow{-2}{*}{$\textsc{Sch}$} & \cellcolor[HTML]{EFEFEF}{\ul 0.053} & \cellcolor[HTML]{EFEFEF}0.071 & \cellcolor[HTML]{EFEFEF}0.085 & \cellcolor[HTML]{EFEFEF}0.094 & \cellcolor[HTML]{EFEFEF}0.104 & \cellcolor[HTML]{EFEFEF}0.110 & \cellcolor[HTML]{EFEFEF}0.115 & \cellcolor[HTML]{EFEFEF}0.122 & \cellcolor[HTML]{EFEFEF}{0.126} & \cellcolor[HTML]{EFEFEF}0.130 & \cellcolor[HTML]{EFEFEF}0.155 & \cellcolor[HTML]{EFEFEF}0.169 & \cellcolor[HTML]{EFEFEF}{0.177} & \cellcolor[HTML]{EFEFEF}{0.184} & \cellcolor[HTML]{EFEFEF}0.188 & \cellcolor[HTML]{EFEFEF}0.192 & \cellcolor[HTML]{EFEFEF}0.193 & \cellcolor[HTML]{EFEFEF}0.193\\\\\bottomrule
\end{tabular}%
}
\end{table*}

\subsection{Datasets}
 In our experiments, we use eight networks of different types. To be consistent, we consider all network as undirected in all experiments, and the detailed statistics of the datasets are provided in Table \ref{tab:networks}. (i) \textsl{CiteSeer} \cite{harp} is a citation network obtained from the \textit{CiteSeer} library. Each node of the graph corresponds to a paper, while the edges indicate reference relationships among papers. The node labels represent the subjects of the paper. (ii) \textsl{Cora} \cite{cora} is another citation network constructed from the publications in the machine learning area; the documents are classified into seven categories. (iii) \textsl{DBLP} \cite{dblpdataset} is a co-authorship graph, where an edge exists between nodes if two authors have co-authored at least one paper. The labels represent the research areas. (iv) \textsl{PPI} (\textsl{Homo Sapiens}) \cite{node2vec} is a protein-protein interaction network for Homo Sapiens, in which biological states are used as node labels. (v) \textsl{AstroPh} \cite{astroph} is a collaboration network constructed from papers submitted to the  \textit{ArXiv} repository for the Astro Physics subject area, from January 1993 to April 2003. (vi) \textsl{HepTh} \cite{astroph} network is constructed in a similar way from the papers submitted to \textit{ArXiv} for the \textit{High Energy Physics - Theory} category. (vii) \textsl{Facebook} \cite{facebook} is a social network extracted from a survey conducted via a \textit{Facebook} application. (viii) \textsl{Gnutella} \cite{gnutella} is the peer-to-peer file-sharing network constructed from the snapshot collected in  August 2002 in which nodes and edges correspond to hosts and connections among them. 

\subsection{Parameter Settings}\label{subsec: paramsetting}
 For random walk-based approaches, we set the window size ($\gamma$) to $10$, the number of walks ($N$) to $80$, and the walk length ($L$) to $10$. We use the embedding size of $d=128$ for each method. The instances of \textsc{KernelNE} and \textsc{MKernelNE} are fed with random walks generated by \textsc{Node2Vec}.  For the training process, we adopt the negative sampling strategy \cite{word2vec} as described in Subsection \ref{subsec:single}. The negative samples are generated proportionally to its frequency raised to the power of $0.75$. For our methods and the baseline models needing to generate negative node instances, we consistently sample $5$ negative nodes for a fair comparison. In our experiments, the initial learning rate of stochastic gradient descent is set to $0.025$; then it is decreased linearly according to the number of processed nodes until the minimum value, $10^{-4}$. For the kernel parameters, the value of $\sigma$ has been chosen as $2.0$ for the single kernel version of the model (\textsc{KernelNE}). For \textsc{MKernelNE}, we considered three kernels and their parameters are set to $1.0, 2.0, 3.0$ and $1.0, 1.5, 2.0$ for \textsc{MkernelNE-Gauss} and \textsc{MkernelNE-Sch}, respectively. As an exception for the \textsl{PPI} network, we employed $0.5, 1.0, 1.5$ for \textsc{MKernelNE-Gauss} since it shows better performance due to the network's structure. The regularization parameters are always set to $\lambda=10^{-2}$ and $\beta=0.1$.

\subsection{Node Classification}
\noindent \textbf{Experimental setup.} In the node classification task, we have access to the labels of a certain fraction of nodes in the network (training set), and our goal is to predict the labels of the remaining nodes (test set). After learning the representation vectors for each node, we split them into varying sizes of training and test sets, ranging from $1\%$ up to $90\%$. The experiments are carried out by applying an one-vs-rest logistic regression classifier with $L_2$ regularization \cite{scikit-learn}. We report the average performance of $50$ runs for each representation learning method.

\vspace{.2cm}
\noindent \textbf{Experimental results.} \Cref{tab:classification_citeseer,tab:classification_cora,tab:classification_dblp,tab:classification_ppi}  report the Micro-$F_1$ and Macro-$F_1$ scores of the classification task. With boldface and underline we indicate the best and second-best performing model, respectively. As can be observed the single and multiple kernel versions of the proposed methodology outperform the baseline models, showing different characteristics depending on the graph dataset. While the \textit{Gaussian} kernel comes into prominence on the \textsl{Citeseer} and \textsl{Dblp} networks, the \textit{Schoenberg} kernel shows good performance for the \textsl{PPI} network. We further observe that leveraging multiple kernels with \textsc{MKernelNE} often has  superior performance compared to the single kernel, especially for smaller training ratios, which corroborates the effectiveness of the data-driven multiple kernel approach. In only a few cases, \textsc{Node2Vec} shows comparable performances. The reason stems from the structure of the network and the used random walk sequences. Since the distribution of the node labels does not always follow the network structure, we observe such occasional cases due to the complexity of the datasets concerning the node classification task.

\begin{table*}[t]
\centering
\caption{Area Under Curve (AUC) scores for the link prediction task.}
\label{tab:edge_prediction}
\begin{tabular}{rccccccccccccc}\toprule
\multicolumn{1}{l}{\textbf{}} & \multicolumn{9}{c}{Baselines} & \multicolumn{2}{c}{\textsc{KernelNE}} & \multicolumn{2}{c}{\textsc{MKernelNE}} \\\cmidrule(lr){2-10}\cmidrule(lr){11-12}\cmidrule(lr){13-14}
\multicolumn{1}{l}{\textbf{}} & \multicolumn{1}{l}{\rotatebox{80}{$\textsc{DeepWalk}$}} & \multicolumn{1}{l}{\rotatebox{80}{$\textsc{Node2vec}$}} & \multicolumn{1}{l}{\rotatebox{80}{$\textsc{Line}$}} & \multicolumn{1}{l}{\rotatebox{80}{$\textsc{HOPE}$}} & \rotatebox{80}{\textsc{ProNE}} & \rotatebox{80}{\textsc{VERSE}} & \multicolumn{1}{l}{\rotatebox{80}{$\textsc{NetMF}$}} & \multicolumn{1}{l}{\rotatebox{80}{$\textsc{GEMSEC}$}} &
\multicolumn{1}{l}{\rotatebox{80}{$\textsc{M-NMF}$}} &
\multicolumn{1}{l}{\rotatebox{80}{$\textsc{Gauss}$}} & \multicolumn{1}{l}{\rotatebox{80}{$\textsc{Sch}$}} & \multicolumn{1}{l}{\rotatebox{80}{$\textsc{Gauss}$}} & \multicolumn{1}{l}{\rotatebox{80}{$\textsc{Sch}$}}\\ \midrule
\textsl{Citeseer} & 0.828 & 0.827 & 0.725 & 0.517 & 0.719 & 0.754 & 0.818 & 0.713 & 0.634 & 0.807 & \textbf{0.882} & 0.850 & {\ul 0.863} \\
\textsl{Cora} & 0.779 & 0.781 & 0.693 & 0.548 & 0.667 & 0.737 & 0.767  & 0.716 & 0.616 & 0.765 & \textbf{0.818} & 0.792 & {\ul 0.807} \\
\textsl{DBLP} & 0.944 & 0.944 & 0.930 & 0.591 & 0.931 & 0.917 & 0.889 & 0.843 & 0.587 & 0.949 & {\ul 0.957} & \textbf{0.958} & {\ul 0.957} \\
\textsl{PPI} & {\ul 0.860} & \textbf{0.861} & 0.731 & 0.827 & 0.784 & 0.570 & 0.748 & 0.771 & 0.806 & 0.749 & 0.796 & 0.803 & 0.784 \\
\textsl{AstroPh} & 0.961 & 0.961 & 0.961 & 0.703 & 0.965 & 0.930 & 0.825 & 0.697 & 0.676 & 0.915 & {\ul 0.970} & \textbf{0.978} & 0.969 \\
\textsl{HepTh} & 0.896 & 0.896 & 0.827 & 0.623 & 0.862 & 0.840 & 0.844 & 0.708 & 0.633 & 0.897 & \textbf{0.915} & {\ul 0.914} & {\ul 0.914} \\
\textsl{Facebook} & 0.984 & 0.983 & 0.954 & 0.836 & 0.982 & 0.981 & 0.975 & 0.696 & 0.690 & 0.984 & {\ul 0.988} & \textbf{0.989} & \textbf{0.989} \\
\textsl{Gnutella} & {\ul 0.679} & 0.694 & 0.623 & \textbf{0.723} & 0.592 & 0.415 & 0.646 & 0.501 & 0.709 & 0.594 & 0.667 & 0.647 & 0.663\\\bottomrule
\end{tabular}%
\end{table*}

\subsection{Link Prediction}
For the link prediction task, we remove half of the network's edges while retaining its connectivity, and we learn node embedding on the residual network. For networks consisting of disconnected components, we consider the giant component among them as our initial graph. The removed edges constitute the positive samples for the testing set;  the same number of node pairs that do not exist  in the original graph is sampled at random to form the negative samples. Then, the entries of the feature vector corresponding to each node pair $(u, v)$ in the test set are computed based on the element-wise operation $|\mathbf{A}_{(v,i)}-\mathbf{A}_{(u,i)}|^2$, for each coordinate axis $i$ of the embedding vectors $\mathbf{A}_{(u,:)}$ and $\mathbf{A}_{(v,:)}$.  In the experiments, we use logistic regression with $L_2$ regularization.

\vspace{.2cm}
\noindent \textbf{Experimental results.} Table \ref{tab:edge_prediction} shows the \textit{Area Under Curve} (AUC) scores for the link prediction task. As we can observe, both the \textsc{KernelNE} and \textsc{MKernelNE} models perform well across different datasets compared to the other baseline models. Comparing now at the AUC score of \textsc{KernelNE} and \textsc{MKernelNE}, we observe that both models show quite similar behavior. In the case of the single kernel model \textsc{KernelNE}, the Schoenberg kernel (\textsc{Sch}) performs significantly better than the Gaussian one. On the contrary, in the \textsc{MKernelNE} model both kernels achieve similar performance. These results are consistent to the behavior observed in the node classification task.


\subsection{Parameter Sensitivity}
In this subsection, we examine how the performance of the proposed models is affected by the choice of parameters.    

\begin{figure}[!t]
\centering
\includegraphics[width=0.99\columnwidth]{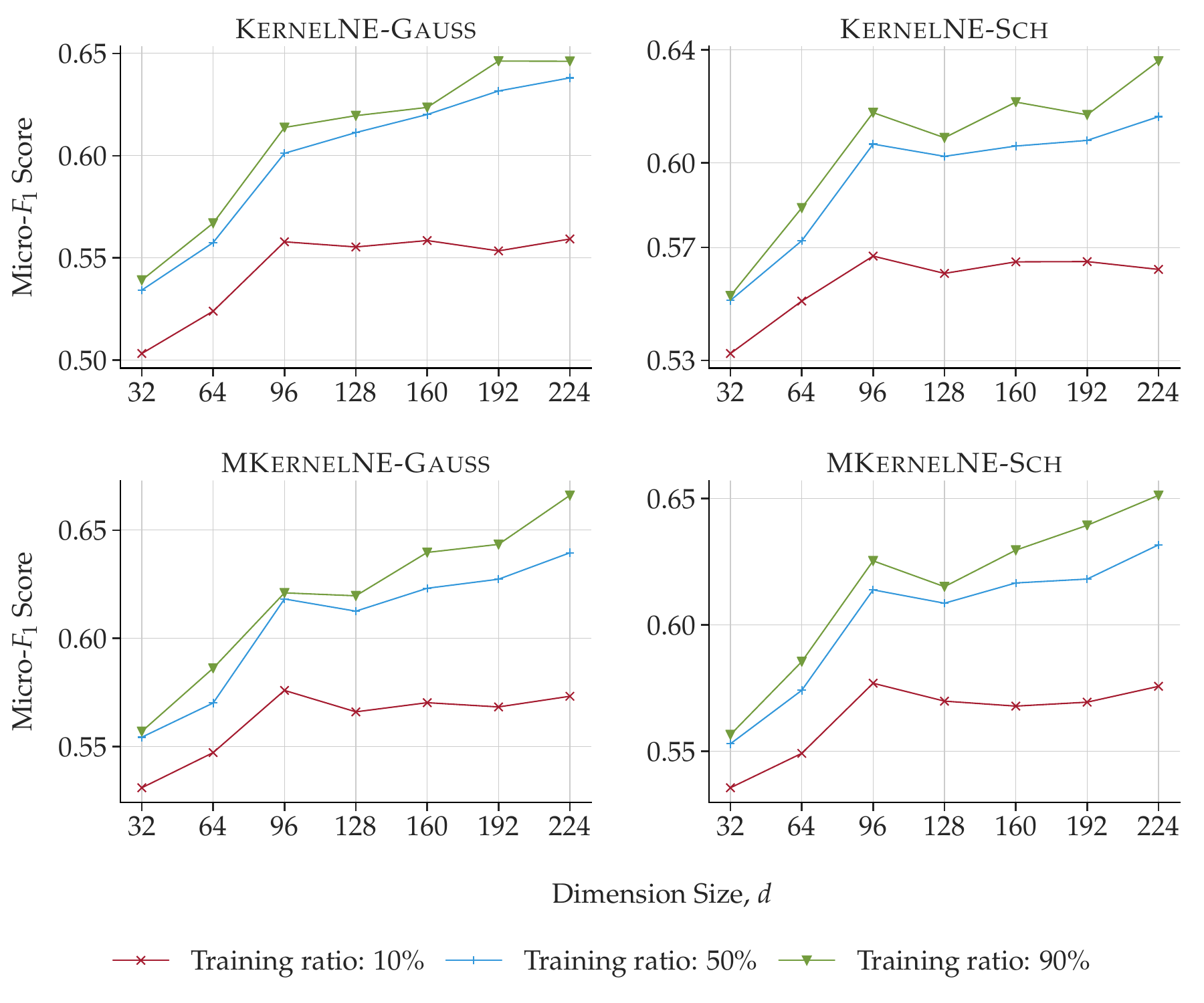}
\vspace{-.2cm}
\caption{Influence of the dimension size $d$ on the \textsl{CiteSeer} network.}
\label{fig:dimension_size}
\end{figure}

\vspace{.1cm}
\noindent \textbf{The effect of dimension size.} The dimension size $d$ is a critical parameter for node representation learning approaches since the desired properties of networks are aimed to be preserved in a lower-dimensional space. Figure \ref{fig:dimension_size} depicts the Micro-$F_1$ score of the proposed models for varying embedding dimension sizes, ranging from $d=32$ up to $d=224$  over the \textsl{Citeseer} network. As it can be seen,  all the different node embedding instances, both single and multiple kernel ones, have the same tendency with respect to $d$; the performance increases proportionally to the size of the embedding vectors. Focusing on 10\% training ratio, we observe that the performance gain almost stabilizes for embedding sizes greater than $96$.

\begin{figure}[!t]
\centering
\includegraphics[width=.99\columnwidth]{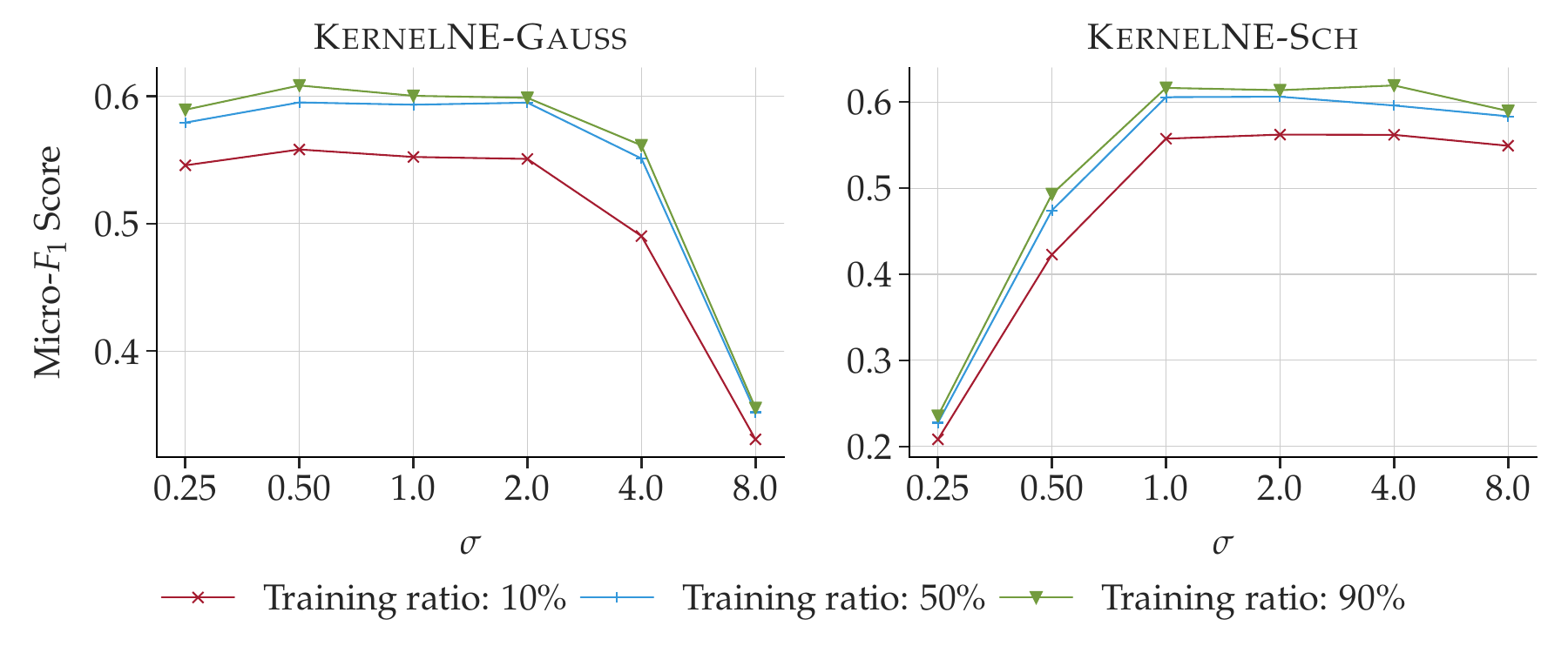}
\vspace{-.2cm}
\caption{Influence of kernel parameters on the \textsl{CiteSeer} network.}
\label{fig:kernel_params}
\end{figure}

\vspace{.1cm}
\noindent \textbf{The effect of kernel parameters.} We have further studied the behavior of the kernel parameter $\sigma$ of the Gaussian and Schoenberg kernel respectively (as described in Sec. \ref{subsec:single}). Figure \ref{fig:kernel_params} shows how the Micro $F_1$ node classification score is affected with respect to $\sigma$ for various training ratios on the \textsl{Citeseer} network. For the Gaussian kernel, we observe that the performance is almost stable for $\sigma$ values varying from $0.25$ up to $2.0$, showing a sudden decrease after $4.0$.  For the case of the Schoenberg kernel, we observe an almost opposite behavior. Recall that, parameter $\sigma$ has different impact on these kernels (Sec. \ref{subsec:single}). In particular, the Micro-$F_1$ score increases for $\sigma$ values ranging from $0.25$ to $0.50$, while the performance almost stabilizes in the range of $4.0$ to $8.0$. 

\subsection{Running Time Comparison}

\begin{table*}[]
\caption{Comparison of running time (in seconds) on an Erd\"os-Renyi random graph and the real-world networks used in the experiments. The symbol ``-'' denotes that the corresponding method is unable to run due to excessive memory requirements or it takes more than one day to complete.}
\label{tab:speedup}
\centering
\begin{tabular}{rccccccccccc}\toprule
 & \rotatebox{80}{\textsc{DeepWalk}} & \rotatebox{80}{\textsc{Node2Vec}} & \rotatebox{80}{\textsc{LINE}} & \rotatebox{80}{\textsc{HOPE}} & \rotatebox{80}{\textsc{VERSE}} & \rotatebox{80}{\textsc{ProNE}} & \rotatebox{80}{\textsc{NetMF}} & \rotatebox{80}{\textsc{GEMSEC}} & \rotatebox{80}{\textsc{M-NMF}} & \rotatebox{80}{\textsc{KernelNE}} & \rotatebox{80}{\textsc{MKernelNE}} \\\midrule
\textsl{Citeseer} & 136 & 42 & 2,020 & 26 & 294 & 5 & 34 & 2,012 & 530 & 93 & 186 \\
\textsl{Cora} & 139 & 40 & 2,069 & 27 & 256 & 5 & 22 & 2,064 & 392 & 87 & 170 \\
\textsl{Dblp} & 1,761 & 335 & 2,717 & 216 & 5,582 & 15 & 324 & 113,432 & 35,988 & 1,079 & 1,942 \\
\textsl{PPI} & 316 & 248 & 3,332 & 45 & 558 & 4 & 14 & 4,320 & 871 & 133 & 256 \\
\textsl{AstroPh} & 1,274 & 318 & 3,152 & 147 & 3,855 & 13 & 135 & 46,768 & 16,875 & 555 & 1,055 \\
\textsl{HepTh} & 804 & 128 & 3,866 & 77 & 1,264 & 5 & 84 & 15,838 & 3,776 & 346 & 625 \\
\textsl{Facebook} & 217 & 75 & 2,411 & 32 & 525 & 4 & 10 & 3,353 & 924 & 114 & 226 \\
\textsl{Gnutella} & 890 & 127 & 3,850 & 82 & 1,130 & 5 & 69 & 11,907 & 3,523 & 330 & 575 \\
\textsl{Erd\"os-Renyi} & 9,525 & 1,684 & 3,449 & 1,163 & 25,341 & 54 & 1,089 & - & - & 3,845 & 6,486\\\bottomrule
\end{tabular}%
\end{table*}

For running time comparison of the different models, we considered all the networks employed in the experiments. Additionally, we have generated an \textit{Erd\"os-Renyi} $\mathcal{G}_{n,p}$ graph model \cite{network_intro}, by setting the number of nodes $n=10^5$ and the edge probability $p=10^{-4}$. Table \ref{tab:speedup} reports the running time (in seconds) for both the baseline and the proposed models.

For this particular experiment, we have run all the models on a machine of $64$GB memory with a single thread when it is possible. The symbol "-" signifies that either the corresponding method cannot run due to excessive memory requirements or that it requires more than one day to complete. As we can observe, both the proposed models have comparable running time---utilizing multiple kernels with \textsc{MKernelNE} improves performance on downstream tasks without heavily affecting efficiency. Besides, \textsc{KernelNE} runs faster than \textsc{LINE} in most cases, while \textsc{MKernelNE} with two kernels shows also a comparable performance. It is also important to note that although matrix factorization-based models such as \textsc{M-NMF} are well-performing in some tasks, they are not very efficient because of the excessive memory demands (please see Sec. \ref{sec:related-work} for more details). On the contrary, our kernelized matrix factorization models are more efficient and scalable by leveraging negative sampling. The single kernel variant also runs faster than the random walk method \textsc{DeepWalk}.

\begin{figure}
\centering
\includegraphics[width=0.7\columnwidth]{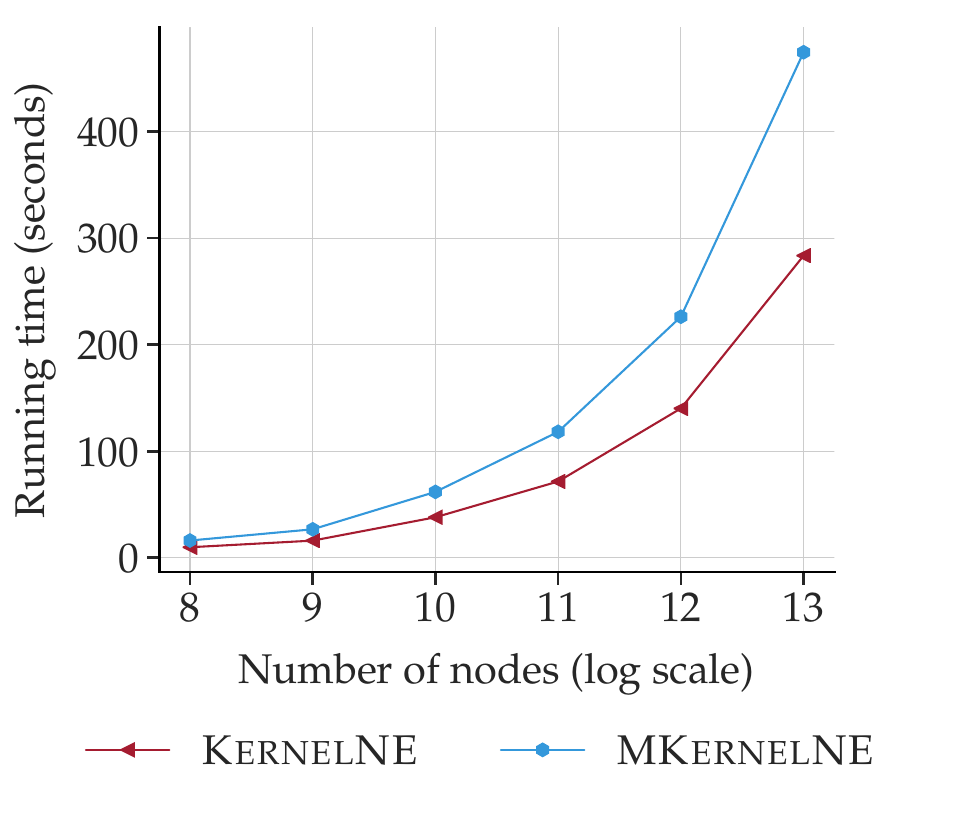}
\vspace{-.2cm}
\caption{Running time of \textsc{KernelNE} and \textsc{MKernelNE} models on Erd\"os-Renyi random graphs of different sizes.}
\label{fig:running_time}
\end{figure}

In addition to the running time comparison of the proposed approach against the baseline models, we further examine the running time of  \textsc{KernelNE} and \textsc{MKernelNE} (using three base kernels)  over Erd\"os-R\'enyi graphs of varying sizes, ranging from $2^8$ to $2^{13}$ nodes. In the generation of random graphs, we set the edge probabilities so that the expected node degree of graphs is $10$. Figure \ref{fig:running_time} shows the running time of the proposed models. Since the Gaussian and Schoenberg kernels have similar performance, we report the running time for the Gaussian kernel only. As we observe, considering multiple kernels does not significantly affect the scalability properties of the \textsc{MKernelNE} model.

\subsection{Visualization and Clustering of Embedding Vectors}

\begin{figure*}[t]
\centering
\includegraphics[width=0.85\textwidth]{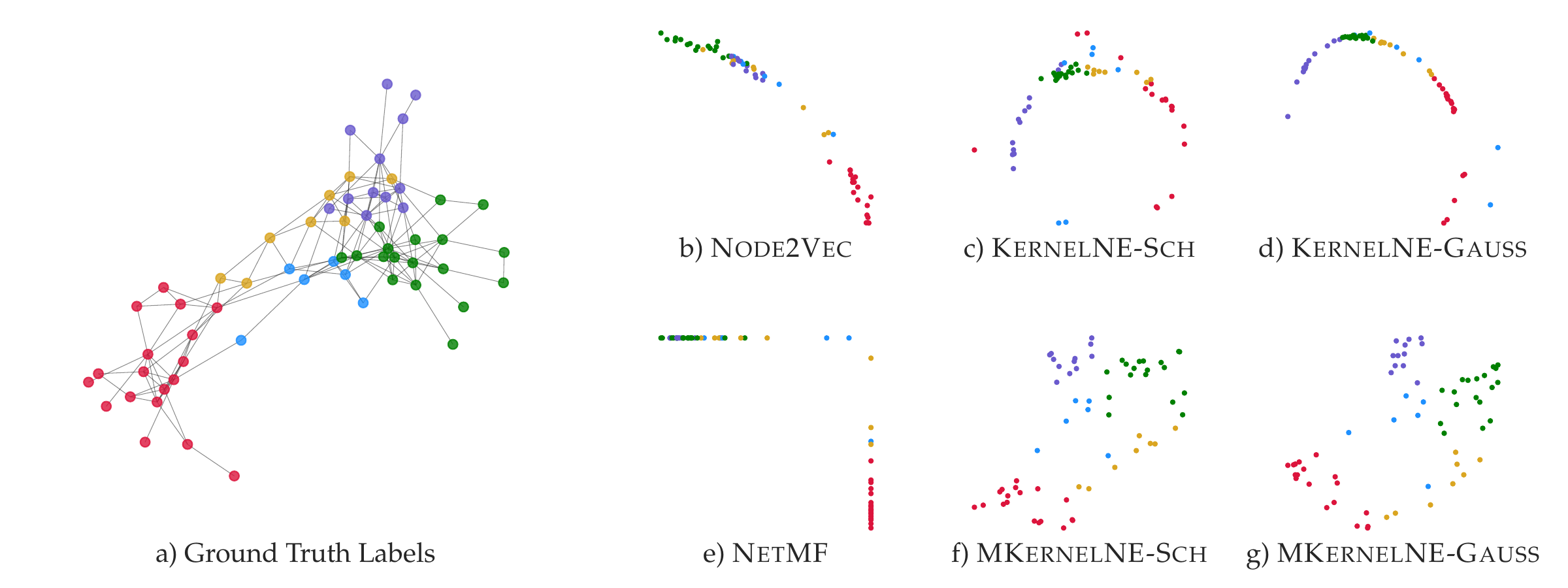}
\vspace{-.2cm}
\caption{The visualization of embeddings learned in 2D space for \textsl{Dolphins}. The colors indicate community labels computed by the \textsc{Louvain} algorithm.}
\label{fig:visualization_analysis}
\end{figure*}

\textit{Modularity} is a measure designed to assess the quality of the clustering structure of a graph  \cite{network_intro}. High modularity implies good clustering structure---the network consists of substructures in which nodes densely connected with each other. Here, we perform a simple visualization and clustering experiment to examine the ability of the different node embedding models to capture the underlying community structure, preserving network's modularity in the embedding space. Note that, to keep the settings of the experiment simple, we leverage the raw embedding vectors visualizing them in the two-dimensional space (instead of performing visualization with t-SNE \cite{t-SNE} or similar algorithms).

We perform experiments on the \textsl{Dolphins} \cite{lusseau2003bottlenose} toy network, which contains $62$ nodes and $159$ edges. We use the \textsc{Louvain} algorithm \cite{louvain} to detect the communities in the network. Each of these detected five communities is represented with a different color in Figure \ref{fig:visualization_analysis}a. We also use the proposed and the baseline models to learn node embeddings in two-dimensional space. 

As we can observe in Fig. \ref{fig:visualization_analysis}, different instances of \textsc{MKernelNE} learn embeddings in which nodes of the different communities are better distributed in the two-dimensional space. Besides, to further support this observation, we run the $k$-\textsc{means} clustering algorithm \cite{scikit-learn} on the embedding vectors, computing the corresponding \textit{normalized mutual information} (NMI) scores \cite{malliaros2013clustering}, assuming as ground-truth communities the ones computed by the \textsc{Louvain} algorithm in the graph space. While the NMI scores for \textsc{Node2Vec} and \textsc{NetMF} are $0.532$ and $0.572$ respectively, \textsc{KernelNE-Sch} achieves $0.511$ while for \textsc{KernelNE-Gauss}  we have $0.607$. The NMI scores significantly increase for proposed multiple kernel models \textsc{MKernelNE-Sch} and \textsc{MKernelNE-Gauss}, which are $0.684$ and $0.740$ respectively.
\section{Conclusion}
In this paper, we have studied the problem of learning node embeddings with kernel functions. We have first introduced \textsc{KernelNE}, a model that aims at interpreting random-walk based node proximity under a weighted matrix factorization framework, allowing to utilize kernel functions. To further boost performance, we have introduced \textsc{MKernelNE}, extending the proposed methodology to the multiple kernel learning framework. Besides, we have discussed how parameters of both models can be optimized via negative sampling in an efficient manner. Extensive experimental evaluation showed that the proposed kernelized models substantially outperform baseline NRL methods in node classification and link prediction tasks.

\par The proposed kernelized matrix factorization opens further research directions in network representation learning that we aim to explore in future work. To incorporate the sparsity property prominent in real-world networks, a probabilistic interpretation \cite{probabilisticMF-nips08} of the proposed matrix factorization mode would be suitable. Besides, it would be interesting to examine how the proposed models could be extended in the case of dynamic networks.


\vspace{.3cm}
\noindent \textbf{Acknowledgements.} Supported in part by ANR (French National Research Agency) under the JCJC project GraphIA (ANR-20-CE23-0009-01).

\appendices



\ifCLASSOPTIONcaptionsoff
  \newpage
\fi



%

\bibliographystyle{IEEEtran}
\bibliography{references}

%

\vspace{-1.0cm}
\begin{IEEEbiography}[{\includegraphics[width=1in,height=1.25in,clip,keepaspectratio]{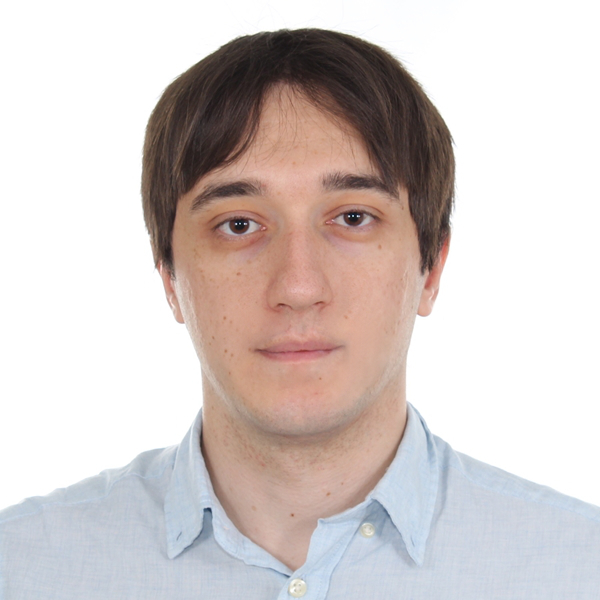}}]{Abdulkadir \c{C}elikkanat} is currently a postdoctoral researcher at the Section for Cognitive Systems of the Technical University of Denmark. He completed his Ph.D. at the Centre for Visual Computing of CentraleSupélec, Paris-Saclay University, and he was also a member of the OPIS team at Inria Saclay. Before his Ph.D. studies, he received his Bachelor degree in Mathematics and Master's degree in Computer Engineering from Bogazi\c{c}i University. His research mainly focuses on the analysis of graph-structured data. In particular, he is interested in graph representation learning and its applications for social and biological networks.
\end{IEEEbiography}
\vspace{-1.05cm}
\begin{IEEEbiography}[{\includegraphics[width=1in,height=1.25in,clip,keepaspectratio]{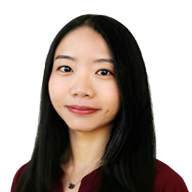}}]{Yanning Shen} is an assistant professor at University of California, Irvine. She
received her Ph.D. degree from the University of Minnesota (UMN) in 2019. She was a finalist for the Best Student Paper Award at the 2017 IEEE International Workshop on Computational Advances in Multi-Sensor Adaptive Processing, and the 2017 Asilomar Conference on Signals, Systems, and Computers. She was selected as a Rising Star in EECS by Stanford University in 2017, and received the Microsoft Academic Grant Award for AI Research in 2021. Her research interests span the areas of machine learning, network science, data science and signal processing.
\end{IEEEbiography}
\vspace{-1.05cm}
\begin{IEEEbiography}[{\includegraphics[width=1in,height=1.25in,clip,keepaspectratio]{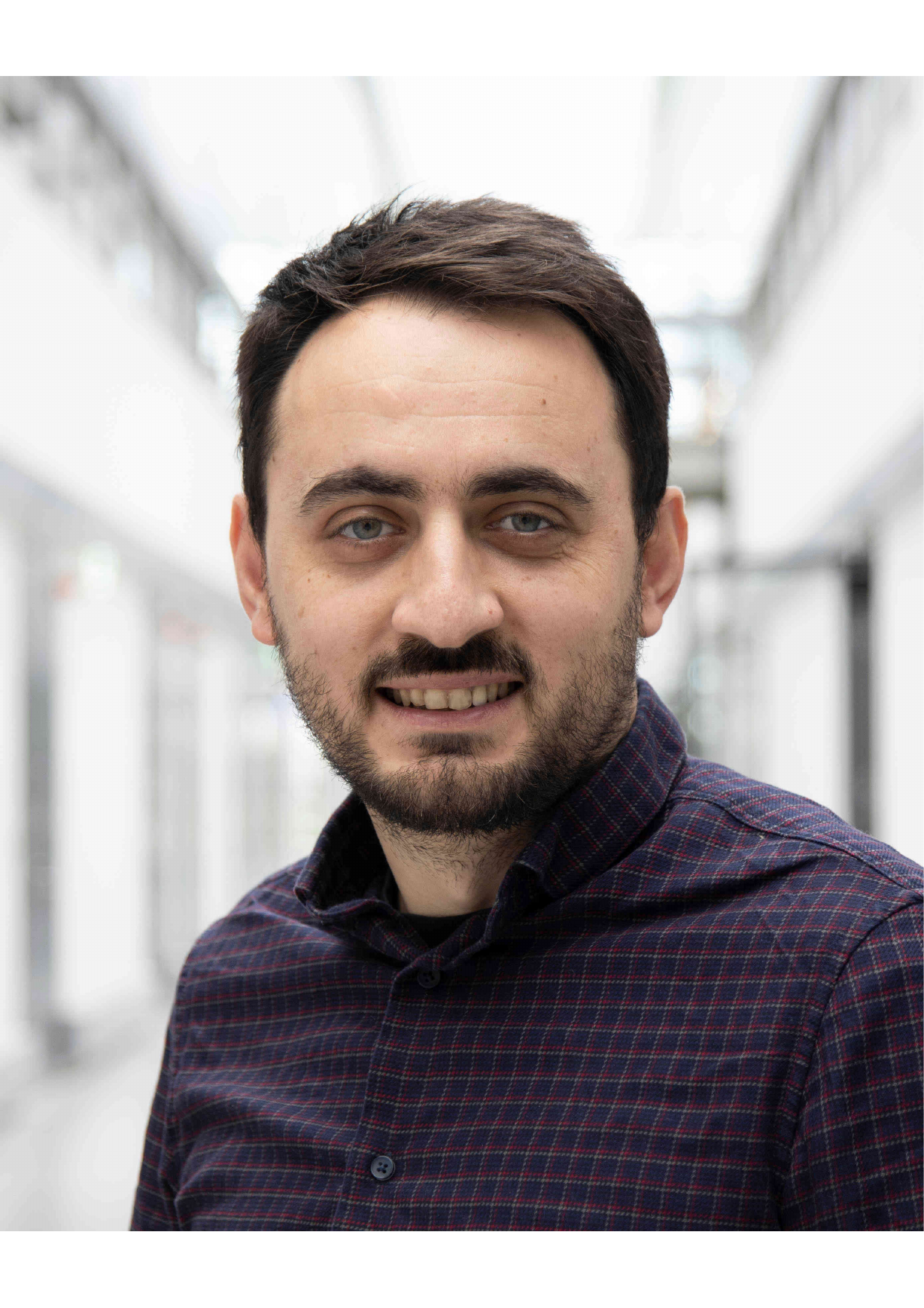}}]{Fragkiskos D. Malliaros}
 is an assistant professor at Paris-Saclay University, CentraleSupélec and associate researcher at
Inria Saclay. He also co-directs the Master Program in Data Sciences and Business Analytics (CentraleSupélec
and ESSEC Business School). Right before that, he was a postdoctoral researcher at UC San Diego (2016-17) and at
École Polytechnique (2015-16). He received his Ph.D. in Computer Science from École Polytechnique (2015) and his
M.Sc. degree from the University of Patras, Greece (2011). He is the recipient of the 2012 Google European Doctoral Fellowship in Graph Mining, the 2015 Thesis Prize by École Polytechnique, and best paper awards at TextGraphs-NAACL 2018 and ICWSM 2020 (honorable mention). In the past, he has been the co-chair of various data science-related workshops, and has also presented twelve invited tutorials at international conferences in the area of graph mining. His research interests span the broad area of data science, with focus on graph mining, machine learning, and network analysis.
\end{IEEEbiography}




\end{document}